\newtheorem{prop}{Proposition}
\newtheorem*{prop-nonumber}{Proposition}
\newtheorem{lemma}{Lemma}
\newtheorem{thm}{Theorem}
\newtheorem*{thm-nonumber}{Theorem}
\newtheorem*{asmredef*}{\assumptionnumber}
\providecommand{\assumptionnumber}{}
\newenvironment{asmredef}[2]
 {%
  \renewcommand{\assumptionnumber}{A.#1* (#2)}%
  \begin{asmredef*}%
  \protected@edef\@currentlabel{A.#1*}%
 }
 {%
  \end{asmredef*}
 }
\newtheorem*{asmredef2*}{\assumptionnumber}
\providecommand{\assumptionnumber}{}
\newenvironment{asmredef2}[2]
 {%
  \renewcommand{\assumptionnumber}{A.#1** (#2)}%
  \begin{asmredef2*}%
  \protected@edef\@currentlabel{A.#1**}%
 }
 {%
  \end{asmredef2*}
 }
\newtheoremstyle{mytheorem}% name of the style to be used
  {\topsep}% measure of space to leave above the theorem. E.g.: 3pt
  {\topsep}% measure of space to leave below the theorem. E.g.: 3pt
  {\itshape}% name of font to use in the body of the theorem
  {0pt}% measure of space to indent
  {\bfseries}% name of head font
  {.}% punctuation between head and body
  { }% space after theorem head; " " = normal interword space
  {\thmname{#1}.\thmnumber{#2}\thmnote{ (#3)}}
\theoremstyle{mytheorem}
\newtheorem{asm}{A}
\title {\large{\textsc{Performance of Empirical Risk Minimization \\ for Linear Regression with Dependent Data}}}
\author{\normalsize{Christian Brownlees}$^{\dag,*}$ \and \normalsize{Gu\dh mundur Stef\'an Gu\dh mundsson}$^{\ddag}$} 
\long\def\symbolfootnote[#1]#2{\begingroup\def\thefootnote{\fnsymbol{footnote}}\footnote[#1]{#2}\endgroup}
\begin{document}    
      
\maketitle

\begin{abstract}

This paper establishes bounds on the performance of empirical risk minimization for large-dimensional linear regression. 
We generalize existing results by allowing the data to be dependent and heavy-tailed.
The analysis covers both the cases of identically and heterogeneously distributed observations.
Our analysis is nonparametric in the sense that the relationship between the regressand and the regressors is not specified. 
The main results of this paper show that the empirical risk minimizer achieves the optimal performance (up to a logarithmic factor) in a dependent data setting. 

{\bigskip \noindent \footnotesize \textbf{Keywords:} empirical risk minimization, linear regression, time series, oracle inequality}

{\bigskip \noindent \footnotesize \textbf{JEL:} C13, C14, C22, C55}

\end{abstract}

\symbolfootnote[0]{\\
\noindent
$^{\dag}$ Department of Economics and Business, Universitat Pompeu Fabra and Barcelona GSE;\\
e-mail: \texttt{christian.brownlees@upf.edu}.\\
$^{\ddag}$ Department of Economics and Business Economics, Aarhus University; \\
e-mail: \texttt{gsgudmundsson@econ.au.dk}.\\
$^*$ Corresponding author. \\ 
We have benefited from discussions with Liudas Giraitis, Emmanuel Guerre, Petra Laketa, Gabor Lugosi, Stanislav Nagy, Jordi Llorens-Terrazas, Yaping Wang and Geert Mesters as well as seminar participants at the Granger Center, Nottingham University; School of Economics and Finance, Queen Mary University of London.
Christian Brownlees acknowledges support from the Spanish Ministry of Science and Technology (Grant MTM2012-37195)
and the Spanish Ministry of Economy and Competitiveness through the Severo Ochoa Programme for Centres of Excellence in R\&D (SEV-2011-0075).
Gu\dh mundur Stef\'an Gu\dh mundsson acknowledges financial support from the 
Danish National Research Foundation (DNRF Chair grant number DNRF154).
}

\newpage
\doublespacing

\section{Introduction}

Let $ \mathcal D = \{ (Y_t, \bm X_t')' \}_{t=1}^T $ be a sequence of dependent random vectors taking values in $\mathcal Y \times \mathcal X$ with $\mathcal Y \subset \mathbb R$ and $\mathcal X \subset \mathbb R^p$.
The $p$-dimensional vector $\bm X_t=(X_{1\,t},\ldots,X_{p\,t})'$ is used to predict the variable $Y_t$ through the class of linear forecasts given by
 \begin{equation}\label{eqn:for}
	f_{\bm \theta\,t} = \theta_1 X_{1\,t} + \ldots + \theta_p X_{p\,t} ~,
\end{equation}
where $(\theta_1,\ldots,\theta_p)' = \bm \theta \in \mathbb R^p$.
As is customary in learning theory, the relation between the regressand $Y_t$ and the regressors $X_{1\,t},\ldots,X_{p\,t}$ is not specified, 
and \eqref{eqn:for} should be interpreted as a class of prediction rules indexed by $\bm \theta \in \mathbb R^p$.

A prediction rule is to be chosen from the data.
The precision of a prediction rule is measured by its average risk defined as
\[
	R( \bm \theta ) = \mathbb E \left[  {1 \over T} \sum_{t=1}^{T} ( Y_{t} - f_{\bm \theta\,t} )^2 \right] ~.
\]
Thus, a natural strategy for choosing a prediction rule from the data consists in minimizing the empirical risk.
The empirical risk minimizer (ERM) is defined as
\begin{equation}\label{eqn:erm}
	\hat{\bm \theta} \in \arg \min_{\bm \theta \in \mathbb R^p } R_T(\bm \theta) ~, 
	\text{ where }
	R_T(\bm \theta) = {1 \over T} \sum_{t=1}^{T} ( Y_{t} - f_{\bm \theta\,t} )^2  ~.
\end{equation}
If more than one prediction rule achieves the minimum we may pick one arbitrarily.
Clearly, the ERM in \eqref{eqn:erm} corresponds to the classic least squares estimator.
We sometimes denote $\hat{ \bm \theta}$ as $\hat{ \bm \theta}(\mathcal D)$ to emphasize that the ERM is a function of the data $\mathcal D$.
The problem we have described so far is known as linear regression in statistics and econometrics whereas in learning theory it is known as linear aggregation \citep{Nemirovski:2000}.

The accuracy of the ERM is measured by its conditional average risk defined as
\begin{equation}\label{eqn:risk:erm}
	R( \hat{ \bm \theta} )  
	= \mathbb E \left[ \left. {1 \over T} \sum_{t=1}^{T} ( Y_{t} - \hat f_{t} )^2 \right| \hat{ \bm \theta} = \hat{ \bm\theta} ( \mathcal D' ) \right] ~,
\end{equation}
where $\hat f_t = \hat \theta_1 X_{1\,t} + \ldots + \hat \theta_p X_{p\,t}$ and $\mathcal D'$ denotes an independent copy of the data $ \mathcal D $.
The performance measure in \eqref{eqn:risk:erm} can be interpreted as the risk of the ERM obtained from the ``training data'' $\mathcal D'$ over the ``validation data'' $\mathcal D$.
This performance measure allows us to keep our analysis close to the bulk of contributions in the learning theory literature (which typically focus on the analysis of i.i.d.~data) and facilitates comparisons.
We also consider as an alternative accuracy measure the conditional out-of-sample average risk of the ERM, which is more attractive for time series applications.
The alternative measure leads a to similar result at the expense of introducing additional notation.

The main objective of this paper is to obtain a bound on the performance of the ERM relative to the optimal risk that can be achieved within the given class of prediction rules.
%More precisely, we aim to establish a bound $B_T(p)$ such that
%\begin{equation}\label{eqn:risk:regret}
%	R( \hat{ \bm \theta } ) \leq \inf_{\bm \theta \in \mathbb R^p} R(\bm \theta) + B_T(p) ~,
%\end{equation}
%holds, with high probability, for all (sufficiently large) $T$.
We aim to establish a bound $B_T(p)$ such that $B_T(p) \rightarrow 0$ as $T \rightarrow \infty$ for which
\begin{equation}\label{eqn:risk:regret}
	R( \hat{ \bm \theta } ) \leq \inf_{\bm \theta \in \mathbb R^p} R(\bm \theta) + B_T(p) 
\end{equation}
holds, with high probability, for all (sufficiently large) $T$.
The inequality in \eqref{eqn:risk:regret} is commonly referred to as an \emph{oracle inequality}.
Oracle inequalities such as \eqref{eqn:risk:regret} provide non-asymptotic guarantees on the performance of the ERM.
The inequality in \eqref{eqn:risk:regret} implies that empirical risk minimization achieves asymptotically the best performance that is possible to attain in the class.
We emphasize that in this paper we study the performance of the ERM for large-dimensional linear regression,
meaning that in our analysis we assume that the number of predictors $p$ is not negligible relative to $T$ (in a sense to be spelled out precisely below).
Establishing bounds on the performance of the ERM is a classic problem in learning theory.
There is a fairly extensive literature that has studied this problem in the i.i.d.~setting \citep{AudibertCantoni:2010}.
The literature conveys that the best possible rate for $B_T(p)$ is of the order $p/T$, which is referred to as the optimal rate of linear aggregation \citep{Tsybakov:2003}.

The main contribution of this paper consists in establishing oracle inequalities for the ERM when
the data are dependent and heavy-tailed.
Our analysis covers both the cases of identically and heterogeneously distributed observations (using the jargon of \citet{White:2001}).  
In particular, our main results establish that the ERM achieves the optimal rate of linear aggregation (up to a $\log(T)$ factor) in a dependent data setting.
Our analysis highlights a trade-off between the dependence and moment properties of the data on the one hand, and the number of predictors on the other.
In particular we show that the higher the dependence and the lower the number of moments of the data, the lower the maximum rate of growth allowed for the number of predictors.
We emphasize that our analysis is nonparametric, in the sense that the relationship between the regressand and the regressors is assumed to be unknown.
Lastly, we remark that the performance bound we recover depends transparently on constants that are straightforward to interpret.

Four remarks are in order before we proceed.
{First, this work establishes prediction performance guarantees for 
empirical risk minimization/least squares estimation with dependent data in a large-dimensional setting.
These results allow us to determine under which conditions least squares estimation is a 
reliable estimation strategy in a large-dimensional setup and 
to appraise more precisely the gains of estimation methodologies specifically designed for such a setup.
It is important to acknowledge that estimation methodologies designed for large-dimensional settings 
(for instance, LASSO) typically achieve substantially better performance guarantees than the ones obtained here. 
However, these gains come at the expense of additional assumptions. 
In fact, the performance guarantees obtained here are optimal (up to a logarithmic factor) \citep{Tsybakov:2003}.
}

{
Second, this paper has a number of connections with the nonparametric literature and, in particular, with nonparametric series methods \citep{Stone:1985,Andrews:1991,Newey:1997,Chen:Shen:1998,Chen:2006,Tsybakov:2014,Belloni:2015}. 
Among these papers we remark that \cite{Chen:Shen:1998} is the only one that considers a non i.i.d.~data setup.
Let $ \{ (Y_t, \bm W_t')' \}_{t=1}^T $ be a strictly stationary sequence of random vectors in $\mathcal Y \times \mathcal W \subset \mathbb R \times \mathbb R^d$.
Then our framework subsumes the problem of estimating the conditional mean of $Y_t$ given $\bm W_t$ on the basis of the approximation given by 
\begin{equation*}
    \mathbb E( Y_t | \bm W_t ) \approx \theta_1 f_1(\bm W_t) + \ldots + \theta_p f_p(\bm W_t)  ~,
\end{equation*}
where $\{ f_i \}$ with $f_i : \mathcal W \rightarrow \mathbb R$ is a collection of functions (e.g.~B-splines) called a dictionary.
We emphasize that, in some sense, our framework is more general since our focus lies on the estimation of the optimal linear prediction rule rather than the conditional mean.
}
%In this paper we assume that the relation between the prediction target $Y_t$ and the predictors ${X_{1\,t},\ldots,X_{p\,t}}$ is unknown, which is a typical assumption in this literature.
%However, rather than estimating the conditional expectation of the prediction target given the predictors, the focus here is on the estimation of the optimal linear prediction rule.
%We also note that the connection to the nonparametric literature is deeper in the sense that we rely on some of the arguments used to establish uniform consistency of nonparametric estimators for dependent data \citep{Hansen:2008,Kristensen:2009} in our theoretical analysis.

Third, the literature on empirical risk minimization and oracle inequalities for dependent data has been rapidly developing in recent years. 
Notable contributions in this area include the works of \citet{JiangTanner:2010}, \citet{Fan:Liao:Mincheva:2011}, \citet{CanerKnight:2013}, \citet{LiaoPhillips:2015} and \citet{MiaoPhillipsSu:2020}.
We remark that one of the challenges of this literature is that it is not straightforward to apply the theoretical machinery used in learning theory in a dependent data setting.
In fact, as forcefully argued in \citet{Mendelson:2015}, several of the standard results on empirical risk minimization used in learning theory assume i.i.d.~bounded data and cannot be extended beyond this setup.
In this work we rely on a proof strategy based on the so-called \emph{small-ball} method developed by Shahar Mendelson and Guillaume Lecu\'e \citep{Mendelson:2015,LecueMendelson:2016}. 
The small-ball method allows us to establish sharp bounds on the performance of the ERM under fairly weak moment and dependence assumptions.

Fourth, our analysis aims to provide large-dimensional analogues of some of the classic results of \citet{White:2001} for fixed-dimensional linear regression with dependent data.
We shall point out the differences between those results and the ones established here. 

This paper is related to various strands of the literature.
First, it is related to the literature on empirical risk minimization for linear aggregation,
which includes \citet{BirgeMassart:1998}, \citet{BuneaTsybakovWegkamp:2007}, \citet{AudibertCantoni:2011} and \citet{LecueMendelson:2016}.
Second, it is related to the literature on empirical risk minimization for heavy-tailed data,
which includes \citet{AudibertCantoni:2011} and \citet{BrownleesJolyLugosi:2015}.
Third, it is related to the literature on empirical risk minimization for dependent data. In particular this contribution is close to \citet{JiangTanner:2010}.
Fourth, this paper is related to the vast literature on nonparametric estimation and nonparametric series methods, which includes \citet{Chen:2006} and \citet{Belloni:2015}.
\citet{Li:Racine:2006} contains a number of important results and references to this literature.
Fifth, it is related to the literature on the small-ball method,
which includes \citet{Mendelson:2018}, \citet{LecueMendelson:2017} and \citet{LecueMendelson:2018}.
Sixth, it is related to the vast literature on machine learning and large-dimensional modeling,
which includes (in econometrics) \citet{Kock:Callot:2015}, \citet{Medeiros:Mendes:2016}, \citet{Garcia:2017} and \citet{Babii:Ghysels:Striaukas:2021}.
\cite{Hastie:2001} and \cite{wainwright_2019} contain a number of important results and references to this literature.

The rest of the paper is structured as follows.
Section \ref{sec:asm} contains preliminaries, additional notation and assumptions.
Section \ref{sec:hetero} contains an oracle inequality for linear regression with heterogeneously distributed observations.
Section \ref{sec:homo} contains an analogous result for identically distributed observations.
Section \ref{sec:ext} contains extensions of the baseline results.
Concluding remarks follow in Section \ref{sec:end}.
All proofs are in the Appendix.

\section{Notation, Preliminaries and Assumptions} \label{sec:asm}

We introduce the notation used in the remainder of the paper.
For a generic vector $\bm x \in \mathbb R^d$ we define $\| \bm x \|_{r}$ as $[ \sum_{i=1}^d |x_i|^r ]^{1/r}$ for $ 1 \leq r < \infty$ and $\max_{i=1,\ldots,d} |x_i| $ for $r=\infty$.
For a generic random variable $X \in \mathbb R$ we define $\| X \|_{L_r}$ as $[\mathbb E ( |X|^r ) ]^{1/r}$ for $ 1 \leq r < \infty$ and $\inf \{ a : \mathbb P( |X|> a ) = 0 \}$ for $r=\infty$.
For a positive semi-definite matrix $\mathbf M$ we use $\mathbf M^{ {1 \over 2} }$ to denote the
positive semi-definite square root matrix of $\mathbf M$
and $\mathbf M^{ -{1 \over 2} }$ to denote the generalized-inverse of $\mathbf M^{1\over 2}$.

In this section we establish a preliminary result and introduce the main assumptions required in our analysis.
All results and assumptions are stated for the case of heterogeneously distributed observations.
Clearly, these simplify in a straightforward manner if the observations are identically distributed. 

We begin by establishing the existence of the optimal prediction rule, that is the \emph{oracle}.
Lemma \ref{lemma:opt} states that there exists an optimal $\bm \theta^*$ that satisfies a Pythagorean-type identity.
We remark that the assumptions of Lemma \ref{lemma:opt} are fairly weak and, in particular, weaker than what we require for the analysis of the ERM. 

\begin{lemma}\label{lemma:opt}
Let $\{Y_t\}_{t=1}^T$ satisfy $\sup_{1\leq t\leq T} \| Y_{t} \|_{L_2} < \infty $ and $\sup_{1\leq i\leq p} \sup_{1\leq t\leq T}\| X_{i\,t} \|_{L_2} < \infty $. 

Then 
\begin{enumerate}[(i)]
\item  there exists a $\bm \theta^* \in \mathbb R^p$ such that  
\[
	\bm \theta^* \in \arg\min_{\bm \theta \in \mathbb R^p} R(\bm \theta) ;
\] 
\item $\bm \theta^*$ is such that for any $\bm \theta \in \mathbb R^p$ it holds that 
\[
	{1 \over T} \sum_{t=1}^{T} \| Y_t - f^*_{t} \|^2_{L_2} + {1 \over T} \sum_{t=1}^{T} \| f^*_{t} - f_{\bm \theta\,t} \|^2_{L_2} = {1 \over T} \sum_{t=1}^{T}  \| Y_t - f_{\bm \theta\,t} \|^2_{L_2} ~,
\]
where $f^*_{t} = f_{\bm \theta^*\,t}$;
\item if $ \sum_{t=1}^T \mathbb E \bm X_t \bm X_t'$ is positive definite then $\bm \theta^*$ is unique.
\end{enumerate}
\end{lemma}

Next, we lay out the assumptions we require to establish the properties of the ERM.

\begin{asm}[Moments]\label{asm:moments}
The sequences $\{Y_t\}_{t=1}^T$, $\{ \bm X_{t} \}_{t=1}^T$, $\{f^*_t\}_{t=1}^T$ satisfy 
$\sup_{1\leq t\leq T}  \| Y_t \|_{L_{r_m}} \leq K_m $,
$\sup_{1 \leq i \leq p} \sup_{1\leq t\leq T}  \| X_{i\,t} \|_{L_{r_m}} \leq K_m $ and
$\sup_{1 \leq i \leq p} \sup_{1\leq t\leq T} \| (Y_{t}-f^*_t) X_{i\,t} \|_{L_{r_m}} \leq K_m $, 
for some $K_m \geq 1$ and $r_m > 2$.
\end{asm}

Assumption \ref{asm:moments} states that the regressand, predictors, 
and the product of the predictors and the forecast error of the optimal prediction rule 
have a number of moments strictly larger than two.
The assumption also states that the $r_m$-th moments are bounded by a constant $K_m$ uniformly in $t$. 
A few comments are in order.
First, this moment assumption is formulated as in \citet[Chapter 3]{White:2001} in the analysis of linear regression with heterogeneous data.
Alternatively, we may state this assumption for the forecast error of the optimal prediction rule and the predictors separately and require at least four moments to exist and to be uniformly bounded.
{Second, we assume $K_m \geq 1$ to obtain simpler expressions of some of the constants that appear in our analysis.
Note that this is without loss of generality.
}
Lastly, we emphasize that this assumption is weaker than what is assumed in a number of contributions on oracle inequalities for dependent data for large dimensional models 
such as \citet{JiangTanner:2010}, \citet{Fan:Liao:Mincheva:2011} and \citet{Kock:Callot:2015} which assume that all moments exist.
We remark that assuming that the moments are uniformly bounded is fairly standard in the analysis of regression models with heterogeneous dependent data and that 
requiring more than two moments to exist is also required to establish consistency of the least
squares estimator for fixed-dimensional linear regression \citep[Chapter 3]{White:2001}.

\begin{asm}[Dependence]\label{asm:mixing}
Let $\mathcal{F}_{-\infty}^s$ and $\mathcal{F}_{s+l}^{\infty}$ be the $\sigma$-algebras generated by $\lbrace (Y_t, \bm X_t')': -\infty \leq t \leq s\rbrace$ and $\lbrace (Y_t, \bm X_t')': s + l \leq t \leq \infty\rbrace$ respectively and 
define the $\alpha$-mixing coefficients 
\begin{equation*}
\alpha(l) = \sup_s \sup_{A \in \mathcal{F}_{-\infty}^s, B \in \mathcal{F}_{s+l}^{\infty}} \left| {\mathbb P \left(A \cap B \right) - \mathbb P \left(A\right) \mathbb P \left(B\right) } \right|.
\end{equation*}
The $\alpha$-mixing coefficients satisfy $ \alpha(l) \leq \exp( -K_\alpha l^{r_\alpha} ) $ for some $K_\alpha>0$ and $r_\alpha>0$. 
\end{asm}

{Assumption \ref{asm:mixing} states that the sequence $\{ (Y_t, \bm X_t')'\}_{t=1}^T$ is strongly mixing with geometrically decaying mixing coefficients.
The definition of the mixing coefficients is as in \citet[Definition 3.42]{White:2001} and does not hinge on the data generating process being stationarity.
See also \citet{SuWhite:2010} for the analysis of $\alpha$-mixing processes that are not required to be stationary.
Note that while this is a stronger assumption than what is required by
classical results for consistency and asymptotic normality for the (finite-dimensional) linear regression model that rely on polynomial $\alpha$-mixing \citep[Chapter 3]{White:2001}, 
geometric $\alpha$-mixing is commonly used in the analysis of large dimensional time series models \citep{JiangTanner:2010,Fan:Liao:Mincheva:2011,Kock:Callot:2015}.
Moreover, geometric $\alpha$-mixing is satisfied by many commonly encountered processes such as ARMA and GARCH \citep{Meitz:Saikkonen:2008}.}

\begin{asm}[Number of Predictors]\label{asm:nparams}
The number of predictors satisfies $p = \lfloor K_p T^{r_p} \rfloor$ for some $K_p > 0$ and $0 \leq r_p < {r_\alpha \over r_\alpha+1} \wedge {r_m-2 \over 2} $.
\end{asm}

Assumption \ref{asm:nparams} states that the number of predictors is a function of $T$.
This assumption allows the number of predictors to be constant or to grow sublinearly in $T$.
Importantly, the bound on the rate of growth of the number of predictors $p$ depends on 
the number of moments and the amount of dependence of the data. 
The more moments and the less dependence, the higher the maximum rate of growth of the number of predictors.
If the data have at least four moments, then the number of predictors is only constrained by the amount of dependence in the data.
%Interestingly, the bound on the growth the number of predictors as a function of $r_\alpha$ is the same as in \citet{JiangTanner:2010}, 
%despite the fact that the our analysis is based on a different concentration inequalities and proof strategy.
%We also remark that in \citet{JiangTanner:2010} the data is a.s.~bounded so no moment assumptions is required in their work.

\begin{asm}[Eigenvalues]\label{asm:eigen}
Define $ \bm \Sigma_t = \mathbb E ( \bm X_t \bm X_t' )$ and let $\lambda_{\min}\left( \bm \Sigma_t \right)$ 
and $\lambda_{\max}\left( \bm \Sigma_t \right)$ be the smallest and 
largest eigenvalue of $\bm \Sigma_t$ respectively. 
Then the sequence $\{ \bm \Sigma_t \}_{t=1}^T$ satisfies 
(i) $\underline{\lambda} \leq \inf_{1\leq t \leq T} \lambda_{\min}\left( \bm \Sigma_t \right)$ 
for some $ 0<\underline{\lambda} $ 
and (ii) $ \sup_{1\leq t \leq T}\lambda_{\max}\left( \bm \Sigma_t \right) \leq \overline{\lambda} $ 
for some $ 0< \overline{\lambda} < \infty$.
\end{asm}

Assumption \ref{asm:eigen} states that the eigenvalues of the covariance matrix of the predictors are bounded from above and bounded away from zero uniformly in $t$.
The assumption that the smallest eigenvalue is bounded away from zero is fairly standard \citep{Newey:1997}.
Notice that $\bm \theta^*$ is unique when \ref{asm:eigen}$(i)$ holds, by Lemma \ref{lemma:opt}$(iii)$. 
Assuming that the largest eigenvalue of the covariance matrix of the predictors is bounded above uniformly in $t$ is more restrictive. 
We remark that, as we shall see in detail below, under the additional assumption of identically distributed observations these constraints can be relaxed.
In what follows we shall also use the constant \( K_{\bm \Sigma} = { \overline{\lambda} / \underline{\lambda} } \),
which is an upper bound on the condition number of the matrices $\{ \bm \Sigma_t \}_{t=1}^T$ and measures the maximum degree of collinearity between the predictors.

\begin{asm}[Distribution]\label{asm:distribution}
Consider the sequence of random vectors $\{ \bm Z_t \}_{t=1}^T$ with $\bm Z_t = \bm \Sigma_t^{-{1\over 2}} \bm X_t$.
Then $\sup_{1 \leq t \leq T} \mathbb P( \bm Z_t \in E ) \leq K_{\bm Z} \mathbb P( \bm S \in E ) $ holds 
for some $p$-dimensional spherical random vector $\bm S$, some positive constant $K_{\bm Z}$ and any $E \in \mathcal B(\mathbb R^p)$.
The density of $\bm S$ exists and the marginal densities of the components of $\bm S$ are bounded from above.
\end{asm}

Assumption \ref{asm:distribution} is required to establish upper bounds on the probability of a certain event associated with the vector of predictors $\bm X_t$ in one of the intermediate propositions of our analysis.
The probability of this event boils down to a multiple integral that can be expressed using $n$-spherical coordinates.
The spherical distribution bound in \ref{asm:distribution} makes it easy to compute such an integral after the $n$-spherical coordinates transformation. 
We conjecture that the assumption could be relaxed, however this would be at the expense of more tedious computations. 
That being said, the family of spherical distributions is fairly large and includes the appropriately standardized versions of the multivariate Gaussian, Student $t$, Cauchy and uniform\footnote{To be precise, the multivariate uniform distribution over the sphere.} distributions.\footnote{For more details on the class of spherical distributions we refer to \citet{Fang:Kotz:Ng:1990}.}
Moreover, finite mixtures of spherical distributions are also spherical.
Assumption \ref{asm:distribution} may be interpreted as a generalization of the bounded density assumption typically encountered in the nonparametric literature \citep{Newey:1997,Li:Racine:2006,Hansen:2008}.
Bounded density assumptions are also formulated in \citet{JiangTanner:2010} in the analysis of empirical risk minimization for time series data with bounded support.
Last, we remark that this assumption allows for weaker moment conditions than what is imposed by Assumption \ref{asm:moments}.
%In \ref{asm:distribution} we are implicitly assuming that the density of the random vector $\bm S$ exists \citep[p. 35]{Fang:Kotz:Ng:1990}.

\begin{asm}[Identification/Small-ball] \label{asm:smallball}
The sequence $\{ \bm X_t \}_{t=1}^T$ satisfies, for each $t=1,\ldots,T$ and for each $\bm \theta_1, \bm \theta_2 \in \mathbb R^p$,
\[
	\mathbb P \left( | f_{\bm \theta_1\,t} - f_{\bm \theta_2\,t} | \geq \kappa_1 \| f_{\bm \theta_1\,t} - f_{\bm \theta_2\,t} \|_{L_2} \right) \geq \kappa_2 ~,
\]
for some $\kappa_1>0$ and $\kappa_2>0$.
\end{asm}

Assumption \ref{asm:smallball} is the so-called small-ball assumption, and it is stated here as it is formulated in \citet{LecueMendelson:2016}.
This assumption can be interpreted as an identification condition.
If we define $\bm v=(\bm \theta_1-\bm \theta_2)$ then the condition is equivalent to \( \mathbb P \left( | \bm v' \bm X | \geq \kappa_1 \| \bm v' \bm X \|_{L_2} \right) \geq \kappa_2 \),
which can be seen as requiring that the random variable $\bm v' \bm X$ does not have excessive mass
in a neighbourhood around zero. 
We remark that the constants $\kappa_1$ and $\kappa_2$ measure the strength of the identification
in the sense that the larger the value of these constants the stronger the identification condition is.
In Section \ref{sec:ext} we establish alternative identification assumptions that in turn imply Assumption \ref{asm:smallball}.

\section{Dependent Heterogeneously Distributed Observations}\label{sec:hetero}

The ERM performance bound that we derive in this section depends on a constant related to the variance of the gradient of the empirical risk evaluated at the optimal prediction rule (after an appropriate rescaling), that is $\operatorname{Var} \left( {1\over \sqrt{T} } \sum_{t=1}^T (Y_t - f^*_t) \bm X_t \right)$.
As is well known, in the standard large sample analysis of linear regression 
the asymptotic variance of the least squares estimator is typically expressed as a function of the limit of this quantity \citep[Chapter 5]{White:2001}.
In our analysis, the ERM performance depends on an upper bound on the diagonal elements of this quantity that is given by
\[
	K_{\sigma^2} = K_m^2 \left( 1 + 128 {r_m \over r_m -2} \sum_{l=1}^\infty \alpha(l)^{1-{2\over r_m} } \right) ~.
\]
It is possible to make substantially smaller choices of this constant if we make simplifying assumptions on the setup of our analysis.
We explore this in more detail in Section \ref{sec:homo}.

We can now state the main result of this section.

\begin{thm}\label{thm:erm}
	Suppose Assumptions \ref{asm:moments}--\ref{asm:smallball} are satisfied.
	Then, for all $T$ sufficiently large, the empirical risk minimizer defined in \eqref{eqn:erm} satisfies
	\begin{equation}\label{eqn:thm}
	R( \hat{\bm \theta} ) \leq R( \bm \theta^* ) + { K_{\sigma^2} } \, { K^{3}_{\bm \Sigma} \over \underline{\lambda} }  \, \left({ 48 \over \kappa_1^2 \kappa_2  }\right)^2 {p \log (T) \over T} ~,
	\end{equation}
	with probability at least $1-{3 K_p (2K_{m})^{r_m} /( K^{1\over 2}_{\sigma^2} \log (T) )} - o( \log(T)^{-1} )$.
\end{thm}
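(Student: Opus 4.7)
The plan is to carry out the small-ball strategy of \citet{Mendelson:2015} and \citet{LecueMendelson:2016}, adapted to the dependent heavy-tailed setting. Setting $\hat{\bm v}=\hat{\bm\theta}-\bm\theta^*$ and expanding the inequality $R_T(\hat{\bm\theta})\leq R_T(\bm\theta^*)$ yields the basic bound
\begin{equation*}
\frac{1}{T}\sum_{t=1}^T (\hat{\bm v}'\bm X_t)^2 \;\leq\; 2\,\hat{\bm v}'\bm M_T,
\qquad
\bm M_T := \frac{1}{T}\sum_{t=1}^T (Y_t-f^*_t)\bm X_t.
\end{equation*}
The program is then to lower-bound the empirical quadratic form on the left uniformly in $\bm v$ (the small-ball step), upper-bound $\|\bm M_T\|_2$ (the multiplier step), and translate the resulting bound on $\|\hat{\bm v}\|_2$ into a bound on the excess risk through Lemma \ref{lemma:opt}(ii) and Assumption \ref{asm:eigen}.

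For the quadratic step I would apply the pointwise inequality $z^2\geq\kappa_1^2\|z\|_{L_2}^2\mathbf{1}\{|z|\geq\kappa_1\|z\|_{L_2}\}$ at $z=\bm v'\bm X_t$, combined with $\|\bm v'\bm X_t\|_{L_2}^2\geq\underline{\lambda}\|\bm v\|_2^2$ from Assumption \ref{asm:eigen}. The remaining task is to show
\begin{equation*}
\inf_{\|\bm v\|_2=1}\;\frac{1}{T}\sum_{t=1}^T \mathbf{1}\bigl\{|\bm v'\bm X_t|\geq\kappa_1\|\bm v'\bm X_t\|_{L_2}\bigr\}\;\geq\;\kappa_2/2
\end{equation*}
with high probability, and this is where most of the work lies. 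Since the expected value of the indicator is at least $\kappa_2$ by Assumption \ref{asm:smallball}, the claim reduces to a uniform deviation of an empirical mean of $\{0,1\}$-valued kernels for a geometrically $\alpha$-mixing array. I would proceed by an $\varepsilon$-net discretisation of the unit sphere (log-cardinality $\asymp p$), a Bernstein-type inequality for $\alpha$-mixing bounded triangular arrays (Berbee coupling into blocks or Merlev\`ede--Peligrad--Rio) at each net point, and a continuity estimate that closes the discretisation error via Assumption \ref{asm:subelliptic}: the sub-elliptical density of $\bm X_t$ transforms a small Euclidean perturbation of $\bm v$ into a small perturbation of $\mathbb{E}\mathbf{1}\{|\bm v'\bm X_t|\geq\kappa_1\|\bm v'\bm X_t\|_{L_2}\}$, uniformly in $t$. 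The constraint on $p$ in Assumption \ref{asm:nparams} is precisely what makes the Bernstein tail dominate after the $O(p)$-size union bound. This uniform lower isometry is the main technical obstacle, because the i.i.d.\ tools used by Mendelson (Gin\'e--Zinn symmetrisation, Talagrand's inequality) do not apply directly and must be replaced by mixing-specific concentration together with the sub-elliptical density assumption.

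For the multiplier step, Assumption \ref{asm:moments} bounds each coordinate of $(Y_t-f^*_t)\bm X_t$ in $L_{r_m}$, while Davydov's covariance inequality together with Assumption \ref{asm:mixing} turns $\sum_l \alpha(l)^{1-2/r_m}<\infty$ into exactly the coordinate-wise variance bound $K_{\sigma^2}$ recorded before the theorem. A truncation at level $\asymp T^{1/r_m}$, a Fuk--Nagaev or Bernstein inequality for the truncated part, and a Markov bound on the residual then deliver $\|\bm M_T\|_2\lesssim\sqrt{K_{\sigma^2}\,p\log T/T}$ coordinate-wise and union-bounded on an event whose complement contributes the Markov term $3K_p(2K_m)^{r_m}/(K_{\sigma^2}^{1/2}\log T)$ appearing in the probability statement; the constraint $r_p<\frac{r_\alpha}{r_\alpha+1}\wedge\frac{r_m-2}{2}$ is exactly what balances the truncation residual against $\sqrt{p\log T/T}$. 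Combining the two steps through Cauchy--Schwarz gives $\frac{\kappa_1^2\kappa_2\underline{\lambda}}{2}\|\hat{\bm v}\|_2^2\leq 2\|\hat{\bm v}\|_2\|\bm M_T\|_2$, hence a bound on $\|\hat{\bm v}\|_2^2$; multiplying by $\overline{\lambda}=K_{\bm\Sigma}\underline{\lambda}$ via Lemma \ref{lemma:opt}(ii) delivers \eqref{eqn:thm} after matching constants, with the additional $K_{\bm\Sigma}$ factors absorbed through the Euclidean/$L_2(\bm X_t)$ conversions required in the covering argument of the quadratic step.
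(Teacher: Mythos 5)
Your proposal follows essentially the same route as the paper: the quadratic/multiplier decomposition of $R_T(\hat{\bm\theta})-R_T(\bm\theta^*)$, a covering argument on the (normalized) parameter set combined with a Bernstein-type inequality for $\alpha$-mixing sequences and the sub-elliptical density to obtain the uniform small-ball lower bound (the paper's Proposition~\ref{lem:1}), Davydov's inequality plus truncation and a Markov bound for the multiplier term (Proposition~\ref{lem:2}), and the final conversion through the eigenvalue bounds and Lemma~\ref{lemma:opt}. The one point where your sketch is underspecified is the discretisation step for the indicator process: because the kernels are discontinuous, it is not enough that a small perturbation of $\bm v$ perturbs $\mathbb{E}\,\mathbf{1}\{\cdot\}$ by a little --- one must dominate the \emph{empirical} discrepancy within each covering ball by an envelope indicator of a boundary neighbourhood and concentrate that envelope's sample average as well (a second application of the mixing Bernstein inequality in the paper), which is also why the paper fixes the threshold at $\kappa_1 K_{\bm\Sigma}^{-1/2}$ rather than using the $(\bm v,t)$-dependent level $\kappa_1\|\bm v'\bm X_t\|_{L_2}$.
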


The theorem establishes that the ERM for large-dimensional linear regression with heterogeneous dependent data achieves the optimal rate of linear aggregation (up to a $\log (T)$ factor). 
{We remark that \ref{asm:nparams} implies that $( p \log (T) )/ T \rightarrow 0$ as $T \rightarrow \infty$, which makes the inequality in the theorem an oracle inequality.}
Note that the bound on the performance of the ERM is proportional to quantities that are associated with a larger asymptotic variability of the least squares estimator.
%In the appendix we provide more precise information on the minimum $T$ required for the concentration result in \eqref{eqn:thm} to hold and on the probability of this event.
%However, we have not pursued optimizing these values and omit these details here for readability. 
%We emphasize that in our analysis we have focused on recovering the optimal rate and providing meaningful constants for the bound in \eqref{eqn:thm}. 
%
{We remark that Theorem \ref{thm:erm} may be seen as a non-asymptotic version of classic asymptotic results in the series estimation literature, which establish optimality of the nonparametric least squares estimator.
In fact, the convergence rate of $p/T$ (up to a $\log(T)$ factor) is the same as the rate obtained (for instance) in \citet[Theorem 4.1]{Belloni:2015}.}

It is interesting to compare Theorem \ref{thm:erm} with an analogous result for i.i.d.~data. 
The following result \citet[Corollary 1.2]{LecueMendelson:2016} is taken as benchmark.

\begin{thm-nonumber} \label{thm:bench}
	Consider the linear regression model 
	\[ 
	Y_t = \bm X_t' \bm \theta^* + \epsilon_t, \qquad t=1,\ldots,T ~, 
	\]
	where $\{ \bm X_t \}$ and $\{ \epsilon_t \}$ are sequences of i.i.d.~random variables 
    with $\mathbb E( \epsilon_t ) = 0$, $\operatorname{Var}( \epsilon_t ) = \sigma^2$,
    and $\epsilon_t$ is independent of $\bm X_t$.
	Assume that there are constants $\kappa_1$ and $\kappa_2$ such that 
	\[
	\mathbb P \left( | f_{\bm \theta_1\,t} - f_{\bm \theta_2\,t} | \geq \kappa_1 \| f_{\bm \theta_1\,t} - f_{\bm \theta_2\,t} \|_{L_2} \right) \geq \kappa_2 ~,
	\]
	for all $\bm \theta \in \mathbb R^p$.
	Then, for all $T > (400)^2 p / \kappa_2^2$ and $x>0$ we have that the empirical risk minimizer defined in \eqref{eqn:erm} satisfies
	\[
	R( \hat{\bm \theta} ) 
	\leq R( \bm \theta^* ) + { \sigma^2 } \left({ 16 \over \kappa_1^2 \kappa_2 }\right)^2 { {p \over T } } \, x ~,
	\]
	with probability at least $1-\exp( - \kappa_2 T/4 )- (1/x)$.
\end{thm-nonumber}

As is immediate to see, we recover an analogous bound to what is established in \citet{LecueMendelson:2016}.
The constant that appears in our risk bound in \eqref{eqn:thm} is much larger than the one in this benchmark result.
However, we remark that below we obtain a more favourable bound by simplifying the setup of our analysis.
Also, we remark that the result above relies on assuming that the ``true model'' exists. 
\citet{LecueMendelson:2016} also have results that do not depend on such an assumption but rely on stronger assumptions on the prediction errors of the optimal forecast.

We conclude this section with a sketch of the proof.
This is an elegant argument based on \citet{LecueMendelson:2016}.
Define the empirical risk differential for $\bm \theta \in \mathbb R^p$ as
\begin{equation*}
\widehat{\mathcal L}_{\bm \theta} 
= R_T(\bm \theta) - R_T(\bm \theta^*) 
= {1 \over T}\sum_{t=1}^T ( f^*_{t} - f_{\bm \theta\,t} )^2 + {2 \over T} \sum_{t=1}^T (Y_t-f^*_{t} )( f^*_{t} - f_{\bm \theta\,t} ) ~.
\end{equation*}	
The proof is based on showing that if the condition 
\begin{equation}\label{eqn:cond}
{1 \over T} \sum_{t=1}^T \|f^*_t - f_{\bm \theta\,t} \|_{L_2}  > { 48 K^{1\over 2}_{\sigma^2} K_\Sigma \over \underline{\lambda} \kappa_1^2 \kappa_2 } \sqrt{ {p \log (T) \over T}}  ~,
\end{equation}
holds, then we have that 
\begin{equation}\label{eqn:inequality}
{1 \over T}\sum_{t=1}^T ( f^*_{t} - f_{\bm \theta\,t} )^2 > \left| {2 \over T}\sum_{t=1}^T (Y_t-f^*_{t} )( f^*_{t} - f_{\bm \theta\,t} ) \right| ~,
\end{equation}	
with high probability.
This, in turn, implies that for any $\bm \theta$ that satisfies \eqref{eqn:cond} we have $\widehat{\mathcal L}_{{\bm \theta}} > 0$.
Since the empirical risk minimizer $\hat{\bm \theta}$ must satisfy $\widehat{\mathcal L}_{\hat {\bm \theta}} \leq 0$ then, conditional on the same events, we must have that 
\[
	{1 \over T} \sum_{t=1}^T \|f^*_t - \hat f_{t} \|_{L_2} \leq { 48 K^{1\over 2}_{\sigma^2} 
    K_{\bm \Sigma} \over \underline{\lambda}\kappa_1^2 \kappa_2 } \sqrt{ {p \log (T) \over T}}  ~,
\]
which, in turn, implies that 
\[
R(\hat {\bm\theta}) - R(\bm\theta^*) = {1 \over T} \sum_{t=1}^T \| f^*_{t} - \hat f_{t}  \|^2_{L_2}
	\leq { K_{\sigma^2} } \, { K^2_{\bm \Sigma} \over \underline{\lambda} } \,\left({ 48 \over \kappa_1^2 \kappa_2 }\right)^2 \, { {p \log (T) \over T }  } ~,  
\]
by an application of Lemma \ref{lemma:opt}.

The following two propositions are key in establishing that the inequality in \eqref{eqn:inequality} holds with high probability and thus to determine the risk bound in Theorem \ref{thm:erm}.

\begin{prop}\label{lem:1}
	Suppose Assumptions \ref{asm:mixing}, \ref{asm:nparams}, \ref{asm:eigen}, \ref{asm:distribution} and \ref{asm:smallball} are satisfied.
	Then, for all $T$ sufficiently large and any $\bm \theta \in \mathbb R^p$,
	\[
	{1 \over T}\sum_{t=1}^T ( f^*_{t} - f_{\bm \theta\,t} )^2 
	\geq { \kappa_1^2 \kappa_2 \over 2 K_{\bm \Sigma}} {1 \over T} \sum_{t=1}^T \| f^*_t - f_{\bm \theta\,t} \|^2_{L_2} ~,
	\]
	holds with probability at least $1-8 T^{-1} - o(T^{-1})$.
\end{prop}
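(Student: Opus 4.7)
The plan is to execute the small-ball argument of Mendelson--Lecué in the mixing setting. Write $\bm v = \bm\theta^* - \bm\theta$, so that $f^*_t - f_{\bm\theta,t} = \bm v'\bm X_t$ and $\|f^*_t - f_{\bm\theta,t}\|^2_{L_2} = \bm v'\bm\Sigma_t\bm v$. The elementary inequality $x^2 \geq \kappa_1^2 c^2 \mathbbm{1}\{|x| \geq \kappa_1 c\}$, applied with $x = \bm v'\bm X_t$ and $c = \sqrt{\bm v'\bm\Sigma_t\bm v}$, together with the bounds $\underline{\lambda}\|\bm v\|_2^2 \leq \bm v'\bm\Sigma_t\bm v$ and $T^{-1}\sum_t \|f^*_t - f_{\bm\theta,t}\|^2_{L_2} \leq \overline{\lambda}\|\bm v\|_2^2$ from Assumption \ref{asm:eigen}, reduces the claim to showing
\[
\inf_{\bm v \in \mathbb S^{p-1}} \frac{1}{T}\sum_{t=1}^T \mathbbm{1}_t(\bm v) \geq \frac{\kappa_2}{2},
\qquad
\mathbbm{1}_t(\bm v) := \mathbbm{1}\bigl\{|\bm v'\bm X_t| \geq \kappa_1 \sqrt{\bm v'\bm\Sigma_t\bm v}\bigr\},
\]
with probability at least $1 - 8T^{-1} - o(T^{-1})$; the quantification over $\bm\theta$ collapses to one over unit vectors because $\mathbbm{1}_t(\bm v)$ and both sides of the target inequality are positively homogeneous in $\bm v$.

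For each fixed $\bm v$, Assumption \ref{asm:smallball} delivers $\mathbb E\mathbbm{1}_t(\bm v) \geq \kappa_2$ for every $t$, and the bounded sequence $\{\mathbbm{1}_t(\bm v)\}_t$ inherits the $\alpha$-mixing coefficients of $\{\bm X_t\}$. A Bernstein-type inequality for geometrically $\alpha$-mixing bounded sequences (of the kind used by \citet{Hansen:2008} and \citet{Kristensen:2009} in the uniform consistency of kernel estimators) then yields $\mathbb P\bigl(T^{-1}\sum_t \mathbbm{1}_t(\bm v) \leq 3\kappa_2/4\bigr) \leq C_1 \exp\bigl(-c_1 \kappa_2^2 T^{r_\alpha/(r_\alpha+1)}\bigr)$. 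Covering $\mathbb S^{p-1}$ by an $\epsilon$-net $\mathcal N$ of cardinality $|\mathcal N| \leq (3/\epsilon)^p$ and union-bounding, the pointwise bound extends to all $\bm v_0 \in \mathcal N$; under Assumption \ref{asm:nparams} ($p \leq K_p T^{r_p}$ with $r_p < r_\alpha/(r_\alpha+1)$), choosing $\epsilon$ equal to a negative power of $T$ keeps the total failure on the net at $o(T^{-1})$.

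The delicate step is transferring the lower bound from $\mathcal N$ to the full sphere. For $\|\bm v - \bm v_0\|_2 \leq \epsilon$ and on the event $\{\|\bm X_t\|_2 \leq M\}$, one has $|\bm v'\bm X_t - \bm v_0'\bm X_t| \leq \epsilon M$ and $|\kappa_1\sqrt{\bm v'\bm\Sigma_t\bm v} - \kappa_1\sqrt{\bm v_0'\bm\Sigma_t\bm v_0}| \leq \kappa_1\sqrt{\overline{\lambda}}\,\epsilon$, so the indicators can disagree only when $|\bm v_0'\bm X_t|$ lies in a strip of width $O(\epsilon(M + \sqrt{\overline{\lambda}}))$ about $\kappa_1\sqrt{\bm v_0'\bm\Sigma_t\bm v_0}$. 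Assumption \ref{asm:subelliptic} is precisely what bounds the probability of such a strip: since $f_{\bm X_t} \leq K_{\bm X} f(\cdot,0,\bm\Sigma_t,g)$, the 1D marginal density of the projection $\bm v_0'\bm X_t$ is at most $K_{\bm X}$ times the marginal density of the elliptical companion with variance $\bm v_0'\bm\Sigma_t\bm v_0 \geq \underline{\lambda}$, hence uniformly of order $1/\sqrt{\underline{\lambda}}$, so the expected mass of the strip is $O(\epsilon(M+\sqrt{\overline{\lambda}})/\sqrt{\underline{\lambda}})$. Choosing $M$ polynomial in $T$ so that $\mathbb P(\max_t\|\bm X_t\|_2 > M) = O(T^{-1})$ via the sub-elliptical tails, and $\epsilon$ small enough that the expected strip mass is below $\kappa_2/8$, one controls the uniform approximation error $T^{-1}\sum_t |\mathbbm{1}_t(\bm v) - \mathbbm{1}_t(\bm v_0)|$ below $\kappa_2/4$ by a second application of the same mixing Bernstein inequality to the strip indicators, again combined with a union bound over $\mathcal N$.

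The main obstacle I expect is this bridging step: calibrating $\epsilon$, $M$, and the slack in $\kappa_2$ simultaneously so that (i) the sub-elliptical density produces a sufficiently thin strip, (ii) the two Bernstein steps absorb the $(3/\epsilon)^p$ penalty given the degraded mixing rate $T^{r_\alpha/(r_\alpha+1)}$, and (iii) the truncation event $\{\max_t\|\bm X_t\|_2 > M\}$ contributes only $O(T^{-1})$. The strict inequality $r_p < r_\alpha/(r_\alpha+1)$ in Assumption \ref{asm:nparams} is exactly the budget that makes these three constraints compatible. Union-bounding the three failure sources (pointwise deviation on $\mathcal N$, net approximation error, radial truncation) then yields the stated $1 - 8T^{-1} - o(T^{-1})$ probability.
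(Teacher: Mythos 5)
Your overall architecture matches the paper's: reduce to a uniform lower bound on an empirical small-ball (indicator) process, discretize with a $\delta$-net whose cardinality is absorbed by a Liebscher-type Bernstein inequality for geometrically $\alpha$-mixing bounded sequences, and control the net-approximation error using the sub-elliptical density. The homogeneity reduction to the unit sphere and the pointwise concentration step are both sound. However, there is a genuine gap in your bridging step. You control $|\mathbbm{1}_t(\bm v)-\mathbbm{1}_t(\bm v_0)|$ by restricting to the event $\{\max_t \|\bm X_t\|_2\leq M\}$ so that the disagreement region becomes a strip of width $O(\epsilon(M+\sqrt{\overline\lambda}))$, and you claim $\mathbb P(\max_t\|\bm X_t\|_2>M)=O(T^{-1})$ for $M$ polynomial in $T$ ``via the sub-elliptical tails.'' But Assumption \ref{asm:subelliptic} imposes no tail or moment control on $\bm X_t$ whatsoever --- it only dominates the density by an elliptical one with bounded marginals --- and Assumption \ref{asm:moments} is \emph{not} among the hypotheses of Proposition \ref{lem:1}. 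A sub-elliptical density can have tails decaying slower than any polynomial, in which case the required $M$ grows super-polynomially, forcing $\epsilon$ to shrink super-polynomially, and the entropy term $(3/\epsilon)^p$ then overwhelms the $\exp(-cT^{r_\alpha/(r_\alpha+1)})$ Bernstein bound. The budget $r_p<r_\alpha/(r_\alpha+1)$ does not rescue this; it is calibrated for $\log(1/\epsilon)=O(\log T)$ only.

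The paper's proof avoids truncation entirely, and this is the idea you are missing. It first renormalizes so that the indicator threshold is the fixed constant $\kappa_1 K_{\bm\Sigma}^{-1/2}$ (paying the factor $K_{\bm\Sigma}^{-1}$ that appears in the final bound), and then observes that the set on which the indicators at $\bm v$ and at the net point $\bm v_i$ can disagree is contained in $S_i=\bigcup_{\bm v\in V_i}\{\bm x: |\bm x'\bm v|=\kappa_1 K_{\bm\Sigma}^{-1/2}\}$, a union of hyperplanes over the $\delta$-ball. After whitening and rotating, this union is sandwiched between two hyper-cones, and the probability that a sub-elliptical vector falls in the resulting double-cone slab is bounded by $K_1 p^{1/2}\delta$ through a purely angular (generalized spherical coordinates) computation (Propositions \ref{prop:l1:S} and \ref{prop:nspherical}). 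Because the bound is driven by the angular distribution rather than by the radius, no control of $\|\bm X_t\|_2$ is needed, which is precisely why Assumption \ref{asm:subelliptic} is formulated as density domination rather than as a tail condition. If you want to salvage your route, you would have to add a tail or moment assumption on $\bm X_t$ (effectively importing Assumption \ref{asm:moments} into Proposition \ref{lem:1}), which would weaken the result relative to what the paper proves.
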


\begin{prop}\label{lem:2}
	Suppose Assumptions \ref{asm:moments}, \ref{asm:mixing}, \ref{asm:nparams} and \ref{asm:eigen} are satisfied.
	Then, for all $T$ sufficiently large and any $\bm \theta \in \mathbb R^p / \{ \bm \theta^* \}$,
	\begin{equation*}
	\left| {1 \over T} \sum^T_{t=1}(Y_t-f^*_t)(f^*_t- f_{\bm \theta\,t} ) \right| \leq  12 \, \sqrt{ K_{\sigma^2} \over \underline \lambda} \, {1 \over T}\sum^T_{t=1}\| f^*_t- f_{\bm \theta\,t}\|_{L_2} \, \sqrt{ {p \log (T) \over T} }  ~,
	\end{equation*}
	holds with probability at least $1- 3 K_p (2K_{m})^{r_m} / ( K^{1\over 2}_{\sigma^2} \log (T) ) - o(\log(T)^{-1})$. 
\end{prop}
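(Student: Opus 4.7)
The plan is to reduce the statement, via a single Cauchy--Schwarz step, to a large-deviation bound on the gradient-type quantity $\bm W_T = T^{-1}\sum_{t=1}^T (Y_t - f^*_t)\bm X_t$, which does not depend on $\bm \theta$. Setting $\bm v = \bm \theta^* - \bm \theta$, the left-hand side equals $\bm v' \bm W_T$, so $|\bm v' \bm W_T| \leq \|\bm v\|_2\|\bm W_T\|_2$ by Cauchy--Schwarz. By Assumption \ref{asm:eigen}, $\|f^*_t - f_{\bm\theta\,t}\|_{L_2} = \sqrt{\bm v' \bm \Sigma_t \bm v} \geq \sqrt{\underline\lambda}\,\|\bm v\|_2$, so $T^{-1}\sum_t \|f^*_t - f_{\bm\theta\,t}\|_{L_2} \geq \sqrt{\underline\lambda}\,\|\bm v\|_2$. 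It therefore suffices to establish the single (uniform-in-$\bm\theta$) event
\[
\|\bm W_T\|_2 \leq 12\,\sqrt{K_{\sigma^2}}\,\sqrt{p \log T / T}
\]
with the stated probability. The first-order optimality of $\bm \theta^*$ from Lemma \ref{lemma:opt} gives $\mathbb E \bm W_T = \bm 0$, so $\bm W_T$ is a centered sum.

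For the variance control I would apply Davydov's covariance inequality for $\alpha$-mixing sequences coordinatewise to $T W_{i,T} = \sum_t (Y_t - f^*_t) X_{i,t}$. Using $\|(Y_t - f^*_t) X_{i,t}\|_{L_{r_m}} \leq K_m$ from Assumption \ref{asm:moments} and the geometric mixing rate from Assumption \ref{asm:mixing}, Davydov gives $|\mathrm{Cov}((Y_s-f^*_s)X_{i,s}, (Y_t-f^*_t)X_{i,t})| \leq C\,\alpha(|t-s|)^{1-2/r_m} K_m^2$; summing over $s,t$ yields $\mathrm{Var}(T W_{i,T}) \leq T K_{\sigma^2}$, which is precisely how the constant $K_{\sigma^2}$ enters the bound, and consequently $\mathbb E \|\bm W_T\|_2^2 = \sum_i \mathrm{Var}(W_{i,T}) \leq p K_{\sigma^2}/T$.

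To obtain the specific probability decomposition I would combine a truncation step with a coordinatewise concentration step. Truncate each $Z_{i,t} = (Y_t - f^*_t)X_{i,t}$ at a level $M$ chosen so that the $L_{r_m}$-Markov + union-bound estimate
\[
\mathbb P\!\left(\max_{1\leq i\leq p,\,1\leq t\leq T} |Z_{i,t}| > M\right) \leq \frac{p\,T\,K_m^{r_m}}{M^{r_m}}
\]
equals the explicit leading term $3 K_p (2K_m)^{r_m}/(K_{\sigma^2}^{1/2}\log T)$, using $p = \lfloor K_p T^{r_p}\rfloor$; this is where the otherwise awkward factor $(2K_m)^{r_m}$ comes from. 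On the complementary (good) event each $Z_{i,t}$ is bounded by $M$, and applying a Bernstein-type deviation bound for strongly mixing sums — available under Assumption \ref{asm:mixing} via standard blocking arguments — to each coordinate $W_{i,T}$ with envelope $M$ and variance bounded by $K_{\sigma^2}/T$, then taking a union bound over the $p$ coordinates, yields the target $\|\bm W_T\|_2 \le 12\sqrt{K_{\sigma^2} p\log T/T}$ up to a concentration failure that is $o((\log T)^{-1})$ (the exponential decay easily defeats the polynomial $p$, given Assumption \ref{asm:nparams}).

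The main obstacle is precisely this matching step: one must choose $M$ large enough that the Bernstein bound on the truncated centered sums survives the union bound over the $p$ coordinates, yet small enough that the truncation failure equals the prescribed term $3K_p(2K_m)^{r_m}/(K_{\sigma^2}^{1/2}\log T)$. This forces $M$ to depend delicately on $T$, $r_p$, $r_m$, and $K_{\sigma^2}$, and the constraint $r_p < (r_m-2)/2$ from Assumption \ref{asm:nparams} is exactly what makes this tradeoff feasible. A subsidiary technicality is the bias from centering the truncated variables, $|\mathbb E \bar Z_{i,t}| \leq K_m^{r_m}/M^{r_m - 1}$, which must be shown to be negligible relative to the target rate $\sqrt{\log T/T}$; the moment/dependence bound on $r_p$ ensures this.
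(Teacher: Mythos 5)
Your proposal is correct and follows essentially the same route as the paper: reduce via Cauchy--Schwarz and $\|\bm v\|_2\le\underline\lambda^{-1/2}$ to a bound on $\|T^{-1}\sum_t(Y_t-f^*_t)\bm X_t\|_2$, then treat coordinates with a union bound, truncate, control the variance by Davydov's inequality (which is where $K_{\sigma^2}$ enters), and apply a Liebscher-type Bernstein inequality for $\alpha$-mixing sums to the bounded recentered part. The only (immaterial) difference is in the tail step: the paper bounds $\mathbb P(|\sum_t Z''_{i\,t}|>\varepsilon)$ by Markov's inequality applied to the first moment of the sum of the truncated-out pieces, with truncation level $b_T=(T^{(1+2r_p)/2}\sqrt{\log T})^{1/(r_m-1)}$, rather than a union bound on $\max_{i,t}|Z_{i\,t}|>M$; both choices are feasible under the constraint $r_p<(r_m-2)/2$ and yield the same $1/\log T$ leading term.
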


Both propositions exploit a Bernstein-type inequality for $\alpha$-mixing sequences from \citet[Theorem 2.1]{Liebscher:1996} (based on the famous covariance inequality of \citet{Rio:1995}).
Proposition \ref{lem:1} uses a covering argument similar to the one used in \citet{JiangTanner:2010} and \citet{Hansen:2008}. 
Proposition \ref{lem:2} relies on the Bernstein-type inequality and a classic truncation trick used in, for instance, \citet{Hansen:2008}.
See also \citet{Dendramis:Giraitis:Kapetanios:2021} and \citet{Babii:Ghysels:Striaukas:2021} for recent developments on concentration inequalities for dependent data with applications to large-dimensional estimation problems.

\section{Dependent Identically Distributed Observations}\label{sec:homo}

The constant term in the bound of Theorem \ref{thm:erm} can be improved by assuming stationarity.

\begin{asm}[Stationarity]\label{asm:stat}
The sequence of random vectors $\{ (Y_t,\bm X_t')' \}_{t=1}^T$ is stationary.
\end{asm}

We remark that Assumptions \ref{asm:mixing} and \ref{asm:stat} imply that the data are ergodic.

In the stationary case it is convenient to state the moment assumption differently.

\begin{asmredef}{1}{Moments}\label{asm:moments:stat}
The sequences $\{Y_t\}_{t=1}^T$, $\{ \bm X_{t} \}_{t=1}^T$, $\{f^*_t\}_{t=1}^T$ and $\{ \bm Z_{t} \}_{t=1}^T$ with $\bm Z_t = \bm \Sigma^{-{1\over 2}}_t \bm X_t$
satisfy 
$\| Y_t \|_{L_{r_m}} \leq K_m $,
$\sup_{1 \leq i \leq p} \| X_{i\,t} \|_{L_{r_m}} \leq K_m $ and
$\sup_{1 \leq i \leq p} \| (Y_{t}-f^*_t) Z_{i\,t} \|_{L_{r_m}} \leq K_m $, 
for some $K_m \geq 1$ and $r_m > 2$.
\end{asmredef}

The difference between \ref{asm:moments} and \ref{asm:moments:stat} is that the former assumption bounds the $r_m$-th moment of 
$(Y_{t}-f^*_t) X_{i\,t}$ whereas the latter bounds the $r_m$-th moment of $(Y_{t}-f^*_t) Z_{i\,t}$.

In the stationary case the assumption on the eigenvalues of $\bm \Sigma$, Assumption \ref{asm:eigen}, can be dropped. 
In fact, as we show in the proof of Theorem \ref{thm:erm:stationary}, \ref{asm:moments:stat} implies that $\lambda_{\max} \left( \bm \Sigma \right) \leq K_m^2 p $.
This allows the set of predictors to be generated by a factor model \citep{Forni:Hallin:Lippi:Reichlin:2000,Stock:Watson:2002,Bai:Ng:2002,Onatski:2012}.
Additionally, $\lambda_{\min}(\bm \Sigma)$ is allowed to be zero.
This allows the set of predictors to contain some predictors that are perfectly correlated. 

Before stating the main result of this section we introduce a new constant
\[
	K'_{\sigma^2} = K_m^2 \left( 1 + 32 {r_m \over r_m -2} \sum_{l=1}^\infty \alpha(l)^{1-{2\over r_m} } \right) ~.
\]
This constant plays the same role as $K_{\sigma^2}$ and  can be interpreted as an upper bound on the diagonal elements of $\operatorname{Var}\left( {1\over \sqrt{T} } \sum_{t=1}^T (Y_t - f^*_t) \bm Z_t \right)$.
Note that $K'_{\sigma^2} \leq K_{\sigma^2}$. 

We can now state the main result of this section.

\begin{thm}\label{thm:erm:stationary}
Suppose Assumptions \ref{asm:moments:stat},\ref{asm:mixing}--\ref{asm:nparams}, \ref{asm:distribution}--\ref{asm:stat}  are satisfied.
Then, for all $T$ sufficiently large, the empirical risk minimizer defined in \eqref{eqn:erm} satisfies
\[
 	R( \hat {\bm\theta} ) \leq R( \bm \theta^* ) +  K'_{\sigma^2} \left({ 48 \over \kappa_1^2 \kappa_2 }\right)^2 {p \log (T) \over T} ~,
\]
with probability at least $1-{3 K_p K_{m}^{r_m} /( (K'_{\sigma^2})^{1\over 2} \log (T) ) } - o( \log(T)^{-1} )$.
\end{thm}

Note that the bound in Theorem \ref{thm:erm:stationary} does not depend on the eigenvalues of $\bm \Sigma$.
Inspection of the proof shows that if $\{ (Y_t,\bm X_t')' \}$ is an i.i.d.~sequence, $Y_t-f^*_t$ is independent of $\bm X_t$ and $\operatorname{Var}( Y_t - f^*_t ) = \sigma^2$ the bound in Theorem \ref{thm:erm:stationary} becomes
\[
	R( \bm \theta^* ) +  \sigma^2 \left({ 48 \over \kappa_1^2 \kappa_2 }\right)^2 {p \log (T) \over T} ~,
\]
which is close to the bound established in \citet[Corollary 1.2]{LecueMendelson:2016}.

%We conclude this section with a corollary of Theorem \ref{thm:erm:stationary}. If one additionally assumes that $ \{ (Y_t - f^*_{t}) \} $ is a m.d.s.~it is possible to further refine the bound on the performance of the ERM.

%\begin{asm}\label{asm:se}
%The sequence of prediction errors $\{ (Y_t - f^*_t )\}_{t=1}^T$ is such that 
%$\mathbb E( (Y_t-f^*_t)^2 | \{X_t\}_{t=1}^T ) = \sigma^2_*$ and $\mathbb E( (Y_t-f^*_t)(Y_{t-l}-f^*_{t-l}) | \{X_t\}_{t=1}^T ) = 0$ for each $l>0$. 
%\end{asm}

%Notice this classic assumption as in Green.

%\begin{cor}\label{thm:erm:stationary_and_mds}
%Suppose Assumptions \ref{asm:moments}--\ref{asm:nparams}, \ref{asm:eigen2}, \ref{asm:distribution}--\ref{asm:se} are satisfied.
%Then, for all $T$ sufficiently large, the empirical risk minimizer defined in \eqref{eqn:erm} satisfies
%\[
% 	R( \hat {\bm\theta} ) \leq R( \bm \theta^* ) + \sigma^2_* {K_m^2 \over \underline{\lambda} } \left({ 48 \over \kappa_1^2 \kappa_2 }\right)^2 {p \log (T) \over T} ~,
%\]
%with probability at least $1-{3 K_p K_{m}^{r_m} /( \sigma_* \log (T) )} - o( \log(T)^{-1} )$.
%\end{cor}

\section{Additional Results}\label{sec:ext}

\paragraph{Alternative Risk Definition.}
It is important to emphasize that the performance measure defined in \eqref{eqn:risk:erm} is the average risk of the prediction rule over the data $\mathcal D$ when $\hat {\bm\theta}$ is estimated using an independent copy of the data $\mathcal D'$.
This measure may have limited appeal for time series applications since a forecaster typically does not have access to an independent copy of the data. 
Alternative more appropriate risk measures may be introduced to evaluate the performance of the risk minimizer in a time series context. 

Assume that we are interested in predicting the out-of-sample observations $\{ (Y_{t}, \bm X_t')' \}_{t = T+1}^{T+H}$ on the basis of the prediction rule estimated from the in-sample observations $\{ (Y_t, \bm X_t')' \}_{t=1}^T$.
For simplicity, here we focus only on the case of identically distributed observations.
We define the average out-of-sample risk of $\bm \theta$ as
\[
		R_\mathsf{oos}(\bm \theta) = \mathbb E\left[ {1 \over H} \sum_{t=T+1}^{T+H} (Y_t - f_{\bm \theta\,t} )^2 \right]  ~,
\]
and we measure the accuracy of the empirical risk minimizer $\hat {\bm \theta}$ using the conditional out-of-sample average risk defined as
\[
		R_\mathsf{oos}(\hat {\bm \theta}) = \mathbb E\left[ \left. {1 \over H} \sum_{t=T+1}^{T+H} (Y_t - f_{\hat {\bm \theta}\,t} )^2 \right| (Y_T,\bm X_T')' , \ldots , (Y_1,\bm X_1')' \right] ~.
\]

For the following result, a slightly stronger version of Assumption \ref{asm:moments} is needed.

\begin{asmredef2}{1}{Moments}\label{asm:moments3}
The sequences $\{Y_t\}_{t=1}^T$, $\{f^*_t\}_{t=1}^T$ and $\{ \bm X_{t} \}_{t=1}^T$
satisfy $\sup_{1 \leq i \leq p} \sup_{1\leq t\leq T} \| Y_{t}-f^*_t \|_{L_{r_m}} \leq K_m $ 
$\sup_{1\leq t\leq T}  \| Y_t \|_{L_{r_m}} \leq K_m $ and $ \sup_{1 \leq i \leq p} \sup_{1\leq t\leq T}  \| X_{i\,t} \|_{L_{r_m}} \leq K_m $ for some $K_m \geq 1$ and $r_m > 4$.
\end{asmredef2}

If we define $K_H = 24 (K_m^2/\underline \lambda) \sum_{l=1}^{\infty} \alpha(l)^{1\over 2}$, we can establish the following theorem.

\begin{thm}\label{thm:erm:oos}
Suppose Assumptions \ref{asm:moments3}, \ref{asm:mixing}, \ref{asm:nparams}, 
\ref{asm:eigen}(i), \ref{asm:distribution}, \ref{asm:smallball} and \ref{asm:stat} are satisfied.
Then, for all $T$ sufficiently large, the empirical risk minimizer defined in \eqref{eqn:erm} satisfies

\[
 	R_\mathsf{oos}( \hat {\bm\theta} ) \leq 
	R_\mathsf{oos}( \bm \theta^* ) +  K'_{\sigma^2} \left({ 48 \over \kappa_1^2 \kappa_2 }\right)^2 {p \log (T) \over T} 
	+ K_H  {p \log (T) \over H} ~,
\]
with probability at least $1-{(6 K_p K_{m}^{r_m} + 1 )/( (K'_{\sigma^2})^{1\over 2} \log (T) ) } - o( \log(T)^{-1} )$.
\end{thm}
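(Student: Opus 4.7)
I would reduce the result to Theorem~\ref{thm:erm:stationary} plus an $\alpha$-mixing correction. Under Assumption~\ref{asm:stat}, the deterministic $\bm\theta^*$ satisfies $R_\mathsf{oos}(\bm\theta^*) = R(\bm\theta^*) = \mathbb E[(Y_1-f_1^*)^2]$ by stationarity, so I split
\[
R_\mathsf{oos}(\hat{\bm\theta}) - R_\mathsf{oos}(\bm\theta^*) = \bigl[R(\hat{\bm\theta}) - R(\bm\theta^*)\bigr] + \bigl[R_\mathsf{oos}(\hat{\bm\theta}) - R(\hat{\bm\theta})\bigr].
\]
Theorem~\ref{thm:erm:stationary} bounds the first bracket by $K'_{\sigma^2}(1/\underline\lambda)(48/\kappa_1^2\kappa_2)^2 p\log(T)/T$ on the event stated there, and the sketch of Theorem~\ref{thm:erm} (applicable mutatis mutandis) additionally yields, on the same event, the parameter bound $\|\hat{\bm\theta}-\bm\theta^*\|_2^2 \leq (K'_{\sigma^2}/\underline\lambda^2)(48/\kappa_1^2\kappa_2)^2 p\log(T)/T$ via the Pythagorean identity (Lemma~\ref{lemma:opt}).

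For the second bracket, since $\hat{\bm\theta}$ is $\mathcal{F}_T$-measurable with $\mathcal{F}_T := \sigma((Y_s,\bm X_s'):s\leq T)$, expanding $(Y_t - f_{\hat{\bm\theta},t})^2$ around $\bm\theta^*$, taking conditional expectations, and subtracting $R(\hat{\bm\theta}) - R(\bm\theta^*) = (\hat{\bm\theta}-\bm\theta^*)'\bm\Sigma(\hat{\bm\theta}-\bm\theta^*)$ gives the decomposition
\[
R_\mathsf{oos}(\hat{\bm\theta}) - R(\hat{\bm\theta}) = (\hat{\bm\theta}-\bm\theta^*)'(\bm M_H - \bm\Sigma)(\hat{\bm\theta}-\bm\theta^*) - 2(\hat{\bm\theta}-\bm\theta^*)'\bm w_H + B,
\]
where $\bm M_H := H^{-1}\sum_{t=T+1}^{T+H}\mathbb E[\bm X_t\bm X_t'|\mathcal{F}_T]$, $\bm w_H := H^{-1}\sum_{t=T+1}^{T+H}\mathbb E[(Y_t-f^*_t)\bm X_t|\mathcal{F}_T]$, and $B := H^{-1}\sum_{t=T+1}^{T+H}\{\mathbb E[(Y_t-f^*_t)^2|\mathcal{F}_T] - \mathbb E[(Y_1-f^*_1)^2]\}$. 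Both $\bm w_H$ and $B$ have zero unconditional mean---$\bm w_H$ by the first-order conditions from Lemma~\ref{lemma:opt}, and $B$ by stationarity.

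The three terms are then bounded via the Rio/Liebscher covariance inequalities for $\alpha$-mixing sequences underlying Propositions~\ref{lem:1}--\ref{lem:2}, now applied to the out-of-sample block of length $H$. The critical step is the linear term: since $\bm w_H = \mathbb E[\hat{\bm w}_H|\mathcal{F}_T]$ for $\hat{\bm w}_H := H^{-1}\sum(Y_t-f^*_t)\bm X_t$, Jensen plus Rio gives $\mathbb E\|\bm w_H\|_2^2 \leq \mathbb E\|\hat{\bm w}_H\|_2^2 \lesssim K_m^2(1+\sum_l\alpha(l)^{1/2})\,p/H$, where the strengthened Assumption~\ref{asm:moments2} ($r_m > 4$) is what keeps each product $(Y_t-f^*_t)X_{it}$ in a space with more than two moments (via Cauchy--Schwarz). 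Markov yields $\|\bm w_H\|_2^2 \leq K_H p\log(T)/(cH)$ on an event of probability at least $1 - 1/((K'_{\sigma^2})^{1/2}\log T)$, which contributes the extra ``$+1$'' in the failure probability. A carefully tuned Young split combines this with the parameter bound to give a linear-term contribution of $K_H p\log(T)/H$ plus a residue of strictly smaller order than $p\log(T)/T$, absorbed by the ``for $T$ sufficiently large'' qualifier. The quadratic term is bounded by $\|\bm M_H - \bm\Sigma\|_\mathrm{op}\|\hat{\bm\theta}-\bm\theta^*\|_2^2 = o(p\log(T)/T)$ (geometric mixing makes $\|\bm M_H-\bm\Sigma\|_\mathrm{op}\to 0$), and $B$ is absorbed into the $o(\log(T)^{-1})$ tail by a Chebyshev bound on its mixing-variance.

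The main obstacle will be matching the coefficient on $p\log(T)/T$ exactly with that of Theorem~\ref{thm:erm:stationary}: a naive Young split of the linear term would inflate this coefficient, so one must tune the AM--GM weights so that the $p\log(T)/T$ residue is a vanishing-in-$T$ fraction of the leading coefficient and can be absorbed into the ``for $T$ sufficiently large'' qualifier. This relies both on the geometric mixing of Assumption~\ref{asm:mixing} (ensuring $K_H < \infty$ and that the other mixing biases decay fast enough) and on the strengthened moments of Assumption~\ref{asm:moments2}. The final probability $1 - (6K_p K_m^{r_m} + 1)/((K'_{\sigma^2})^{1/2}\log T) - o(\log(T)^{-1})$ emerges from a union bound over the Theorem~\ref{thm:erm:stationary} event (contributing $3K_p K_m^{r_m}$), an analogous Proposition~\ref{lem:2}-type bound applied to the out-of-sample block (contributing another $3K_p K_m^{r_m}$), and the Markov bound on $\|\bm w_H\|_2^2$ (contributing the $+1$).
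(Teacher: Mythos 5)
Your top-level decomposition coincides with the paper's: both reduce the problem to (i) the independent-copy excess risk, bounded by Theorem \ref{thm:erm:stationary} (the paper realizes the independent copy explicitly via a ``ghost'' out-of-sample block $\{(Y^G_t,(\bm X^G_t)')'\}_{t=T+1}^{T+H}$, so that $R^G_\mathsf{oos}(\hat{\bm\theta})$ plays exactly the role of your $R(\hat{\bm\theta})$), plus (ii) a conditional-versus-unconditional correction for the actual out-of-sample block, expanded around $\bm\theta^*$ into a constant, a linear and a quadratic term. Where you diverge is in how (ii) is controlled. The paper conditions on the event $\mathcal E=\{R^G_\mathsf{oos}(\hat{\bm\theta})-R^G_\mathsf{oos}(\bm\theta^*)\le 1\}$, which yields only the crude radius $\|\hat{\bm\theta}-\bm\theta^*\|_2\le\underline\lambda^{-1/2}$, and then bounds all three terms uniformly over that ball in $L_1$ via Ibragimov's inequality, $\| \mathbb E_T[W]-\mathbb E[W]\|_{L_1}\lesssim\alpha(h)^{1/2}\|W\|_{L_2}$ (this is where $r_m>4$ enters, since it is applied to $(Y_{T+h}-f^*_{T+h})^2$), obtaining $24\,\alpha(h)^{1/2}(K_m^2/\underline\lambda)\,p$ per horizon and finishing with a single conditional Markov bound; consequently all of (ii), including the quadratic term, is charged to the $K_H\,p\log(T)/H$ budget. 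You instead combine the sharp parameter bound with a Jensen/long-run-variance estimate for $\bm w_H$ and a tuned Young split, which works and indeed avoids inflating the leading coefficient.

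The one genuine gap is your claim that the quadratic term is $o(p\log(T)/T)$: this needs $\|\bm M_H-\bm\Sigma\|_{\mathrm{op}}=o(1)$ with $p$ growing, whereas the natural entrywise Ibragimov bound only delivers something of order $p\cdot H^{-1}\sum_h\alpha(h)^{1/2}$ for the operator norm, so your claim implicitly requires $p=o(H)$ --- a relation between $p$ and $H$ that the theorem does not assume. The fix is simply to charge the quadratic term to the $p\log(T)/H$ term, as the paper does (it is precisely what contributes the $K_m^2 p/\underline\lambda$ factor inside $K_H$). A second, cosmetic point: your probability accounting reaches the right total by a different route; the paper's $6K_pK_m^{r_m}$ arises as $2\delta_1$ from the union-bound/total-probability step applied to the single Theorem \ref{thm:erm:stationary} event, not from a second Proposition \ref{lem:2}-type bound on the out-of-sample block, while the $+1$ does come, as you say, from the Markov step on the correction term.
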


A key ingredient in the proof of Theorem \ref{thm:erm:oos} is Ibragimov's inequality \citep{Ibragimov:1962}, 
which bounds the expected value of the difference between the conditional and unconditional
expectation as a function of the $\alpha$-mixing coefficients.
{It is important to remark that the theorem requires $ ( p \log(T) ) / H \rightarrow 0$ in order to have that $ R_\mathsf{oos}( \hat {\bm\theta} ) - R_\mathsf{oos}( \bm \theta^* ) \rightarrow 0$.
In other words there exists a ``wedge'' between $R_\mathsf{oos}( \hat {\bm\theta} )$ and $R_\mathsf{oos}( {\bm\theta}^* )$ that only vanishes as the forecast horizon $H$ grows large.
This may be intuitively explained as follows.
The empirical risk minimizer $\hat{\bm \theta}$ is consistent for $\bm \theta^*$, the minimizer of $R(\bm \theta)$.
However, the minimizers of $R(\bm \theta)$ and $R_{oos}(\bm \theta)$ are not guaranteed to be same for finite $H$ and the difference between the two only vanishes as the forecast horizon $H$ grows large.}

\paragraph{Small-Ball Assumption.} It is possible to introduce alternative assumptions that imply the small-ball condition stated in \ref{asm:smallball}.
For example, as \citet{LecueMendelson:2016} remark, the small-ball condition holds when the $L_2$ and $L_4$ norms of $f_{\bm \theta\,t}$ are equivalent. 
More precisely, if for each $\bm \theta \in \mathbb R^p$ (and all $t=1,\ldots,T$) it holds that \(\| f_{\bm \theta\,t} \|_{L_4} \leq C \| f_{\bm \theta\,t} \|_{L_2} \),
for some constant $C$ (that does not depend on $\bm \theta$ or $t$). We remark that norm equivalence conditions are commonly used in the literature to establish the properties of empirical risk minimization, see for instance \citet{AudibertCantoni:2011}.

To give a concrete example, below we show that if the distribution of the standardized predictors $\bm Z_t = \bm \Sigma^{-{1\over 2}} \bm X_t$ is spherical then the small-ball assumption is satisfied.

\begin{asmredef}{6}{Spherical Density}\label{asm:spherical}
Consider the sequence of random vectors $\{ \bm Z_t \}_{t=1}^T$ with $\bm Z_t = \bm \Sigma_t^{-{1\over 2}} \bm X_t$.
Then for each $t=1,\ldots,T$ it holds that $\bm Z_t \sim \bm S$ where $\bm S$ is a $p$-dimensional spherical random vector that satisfies $\sup_{1\leq i \leq p} \| S_i \|_{L_4} < \infty$.
\end{asmredef}

\begin{lemma}\label{lemma:smallball} 
Suppose Assumption \ref{asm:spherical} holds.
Then, for each $t=1,\ldots,T$ and for each $\bm \theta_1, \bm \theta_2 \in \mathbb R^p$, 
\(
	\mathbb P \left( | f_{\bm \theta_1\,t} - f_{\bm \theta_2\,t} | \geq \kappa_1 \| f_{\bm \theta_1\,t} - f_{\bm \theta_2\,t} \|_{L_2} \right) \geq \kappa_2 
\)
holds for some $\kappa_1>0$ and $\kappa_2>0$.
\end{lemma}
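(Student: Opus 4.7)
The plan is to verify the small-ball condition via the Paley--Zygmund inequality, after reducing the problem to an $L_2$--$L_4$ norm equivalence for univariate linear combinations of $\bm X_t$. Setting $\bm v = \bm \theta_1 - \bm \theta_2$, the target becomes $\mathbb P(|\bm v' \bm X_t| \geq \kappa_1 \|\bm v' \bm X_t\|_{L_2}) \geq \kappa_2$. If $\bm v' \bm \Sigma_t \bm v = 0$ (in particular if $\bm v = 0$), both sides vanish almost surely and the inequality is trivial, so we may assume $\bm v' \bm \Sigma_t \bm v > 0$.

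The key structural step is to show that, as a consequence of ellipticity, $\bm v' \bm X_t$ is distributed as a scalar multiple of a fixed random variable $Z$ whose law does not depend on $\bm v$ or $t$. Using the characteristic function representation recalled in Section~\ref{sec:asm} evaluated at $s \bm v$ for $s \in \mathbb R$,
\[
\mathbb E\bigl[\exp(i s \, \bm v' \bm X_t)\bigr] = \varphi_{\bm X_t}(s \bm v) = \phi\bigl(\tfrac{1}{2} s^2 \, \bm v' \bm \Sigma_t \bm v\bigr) ~,
\]
which is exactly the characteristic function of $\sqrt{\bm v' \bm \Sigma_t \bm v} \cdot Z$ when $Z$ has characteristic function $\phi(u^2/2)$. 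Specialising to $\bm v$ equal to the $i$-th standard basis vector gives $X_{i\,t} \stackrel{d}{=} \sqrt{(\bm \Sigma_t)_{ii}} \, Z$; since $\|X_{i\,t}\|_{L_4}$ is finite by Assumption \ref{asm:elliptical} and $(\bm \Sigma_t)_{ii} > 0$ (by nondegeneracy of the density), we conclude $0 < \|Z\|_{L_2} \leq \|Z\|_{L_4} < \infty$. With $C = \|Z\|_{L_4}/\|Z\|_{L_2}$, the same scaling yields $\|\bm v' \bm X_t\|_{L_4} = C \|\bm v' \bm X_t\|_{L_2}$ uniformly in $\bm v$ and $t$, which is the norm-equivalence condition flagged in the paragraph preceding the lemma.

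With norm equivalence in hand, the small-ball inequality follows from the Paley--Zygmund inequality applied to the nonnegative random variable $U = (\bm v' \bm X_t)^2$: for every $\theta \in (0,1)$,
\[
\mathbb P(U \geq \theta \, \mathbb E U) \geq (1-\theta)^2 \frac{(\mathbb E U)^2}{\mathbb E(U^2)} = \frac{(1-\theta)^2}{C^4} ~,
\]
so fixing any $\theta \in (0,1)$, for instance $\theta = 1/2$, delivers the claim with $\kappa_1 = \sqrt{\theta}$ and $\kappa_2 = (1-\theta)^2/C^4$. The main substantive ingredient is the ellipticity-based identification $\bm v' \bm X_t \stackrel{d}{=} \sqrt{\bm v' \bm \Sigma_t \bm v} \cdot Z$, which collapses the family of linear projections to a one-parameter rescaling of a single universal law; everything else is a routine application of Paley--Zygmund.
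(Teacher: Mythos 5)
Your proof is correct and follows essentially the same route as the paper: both reduce $\bm v' \bm X_t$ to a scaled copy $\sqrt{\bm v' \bm \Sigma_t \bm v}\cdot Z$ of a universal elliptical random variable $Z$ and then apply the Paley--Zygmund inequality to $(\bm v' \bm X_t)^2$, yielding $\kappa_2 = (1-\vartheta)^2\,\mathbb E(|Z|^2)^2/\mathbb E(|Z|^4)$. Your version merely spells out some details the paper leaves implicit (the characteristic-function derivation of the scaling property, the degenerate case $\bm v' \bm \Sigma_t \bm v = 0$, and the finiteness of $\|Z\|_{L_4}$ via the fourth-moment clause of Assumption A.5*).
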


The proof uses the Paley-Zygmund inequality.
\ref{asm:spherical} can replace \ref{asm:smallball} in Theorem \ref{thm:erm} and Theorem \ref{thm:erm:stationary}.

\section{Conclusion}\label{sec:end}

This paper establishes oracle inequalities for the prediction risk of the empirical risk minimizer for large-dimensional linear regression.
We generalize existing results by allowing the data to be dependent and heavy-tailed.
Our main results show that the empirical risk minimizer achieves optimal performance (up to a logarithmic factor). 
The results have been established using the small-ball method, which is a powerful technique to obtain oracle inequalities. 
Future research includes extending these results to regularized empirical risk minimization, analogously to \citet{LecueMendelson:2017,LecueMendelson:2018}.

\appendix

\section{Proofs}

\begin{proof}[Proof of Lemma \ref{lemma:opt}]
$(i)$ The existence of $\bm \theta^*$ follows from the fact that $R(\bm \theta)$ is quadratic. \\
$(ii)$ It is equivalent to show that $\bm \theta^*$ satisfies
\begin{equation}\label{eqn:opt:orthogonal}
	{1 \over T}\sum_{t=1}^T  \mathbb E [ (Y_t-f^*_{t} )( f^*_{t} - f_{\bm \theta\,t} ) ] = \left({1 \over T}\sum_{t=1}^T  \mathbb E [ (Y_t-f^*_{t} ) \bm X_t' ]\right)( \bm \theta^* - \bm \theta) = 0 ~,
\end{equation}
for any $\bm \theta \in \mathbb R^p$. We then have that \eqref{eqn:opt:orthogonal} 
is implied by the first order condition for a minimum for $R(\bm \theta)$, 
as $\bm \theta^*$ is such that  
\( {2 \over T}\sum_{t=1}^T  \mathbb E [ (Y_t-f_t^* ) \bm X'_t ] = 0 \).\\
$(iii)$ This follows from the strict convexity of $R(\bm \theta)$. 
\end{proof}

\begin{proof}[Proof of Theorem \ref{thm:erm}]
Define the empirical risk differential for an arbitrary $\bm \theta \in \mathbb R^p$ as
\begin{equation*}
	\widehat{\mathcal L}_{\bm \theta} = R_T(\bm \theta) - R_T(\bm \theta^*) = {1 \over T}\sum_{t=1}^T ( f^*_{t} - f_{\bm \theta\,t} )^2 + {2 \over T} \sum_{t=1}^T (Y_t-f^*_{t} )( f^*_{t} - f_{\bm \theta\,t} ) ~.
\end{equation*}	
Assume that it holds that 
\begin{equation}\label{eqn:mu_is_far}
	 {1 \over T} \sum_{t=1}^T \|f^*_t - f_{\bm \theta\,t} \|_{L_2}  > 48 \sqrt{ K_{\sigma^2} \over \underline{\lambda} } {K_{\bm \Sigma} \over  \kappa_1^2 \kappa_2 } \sqrt{ {p \log (T) \over T}} ~.
\end{equation}
Conditioning on the events of Proposition \ref{lem:1} and Proposition \ref{lem:2},
for all $T$ sufficiently large, at least with probability $1 - {3 K_p (2K_m)^{r_m} /(K^{1\over 2}_{\sigma^2} \log (T)) } - o(\log(T)^{-1})$, we have that
\begin{align*}
	{1 \over T}\sum_{t=1}^T ( f^*_{t} - f_{\bm \theta\,t} )^2 
	& \stackrel{(a)}{\geq} { \kappa_1^2 \kappa_2 \over 2 K_{\bm \Sigma}} {1 \over  T} \sum_{t=1}^T \| f^*_t - f_{\bm \theta\,t} \|^2_{L_2} 
	\stackrel{(b)}{\geq} { \kappa_1^2 \kappa_2 \over 2 K_{\bm \Sigma}} {1 \over T} \sum_{t=1}^T \|f^*_t - f_{\bm \theta\,t} \|_{L_2} {1 \over T} \sum_{t=1}^T \| f^*_t - f_{\bm \theta\,t}  \|_{L_2}\\
	& \stackrel{(c)}{>} 24 \sqrt{ K_{\sigma^2} \over \underline \lambda } {1\over T}  \sum_{t=1}^T \|f^*_t - f_{\bm \theta\,t} \|_{L_2} \sqrt{ {p \log (T) \over T } } 
 	\stackrel{(d)}{\geq} \left| {2 \over T}\sum_{t=1}^T (Y_t-f^*_{t} )( f^*_{t} - f_{\bm \theta\,t} ) \right| ~,
\end{align*}	
where 
$(a)$ follows from Proposition \ref{lem:1},
$(b)$ follows from Jensen's inequality,
$(c)$ follows from condition \eqref{eqn:mu_is_far},
and $(d)$ follows from Proposition \ref{lem:2}.
Thus, conditional on the events of Proposition \ref{lem:1} and Proposition \ref{lem:2} and assuming \eqref{eqn:mu_is_far} holds we have with high probability that \( \widehat{\mathcal L}_{\bm \theta} > 0 \).
Since the empirical risk minimizer $\hat{\bm \theta}$ satisfies $\widehat{\mathcal L}_{\hat{\bm \theta}} \leq 0$ then conditional on the same events we have $ {1 \over T} \sum_{t=1}^T \|f^*_t - \hat f_{t} \|_{L_2} \leq { 48 K^{1\over 2}_{\sigma^2} K_{\bm \Sigma} \over \underline{\lambda}^{1\over 2} \kappa_1^2 \kappa_2 } \sqrt{ {p \log (T)\over T} } $. 
The claim follows from
\begin{equation}\label{eqn:final}
	R(\hat{\bm\theta}) - R(\bm \theta^*) 
	= {1 \over T} \sum_{t=1}^T \| f^*_t - \hat f_{t} \|^2_{L_2} 
	\leq \overline{\lambda} \| \hat{\bm \theta}-\bm \theta^* \|^2_{2} 
	\leq { K_{\sigma^2} } \, { K^{3}_{\bm \Sigma} \over \underline{\lambda} }  \,\left({ 48 \over
    \kappa_1^2 \kappa_2 }\right)^2 \, { {p \log (T) \over T }  } ~,
\end{equation}
where
the first equality follows from Lemma \ref{lem:1} where the $L_2$ norm is conditional on $\{ \hat \theta = \hat \theta( \mathcal D' )\}$,
the first inequality follows from $ {1 \over T} \sum_{t=1}^T \|f^*_t - \hat f_{t} \|^2_{L_2} \leq \overline{\lambda} \| \hat{\bm \theta} - \bm \theta^* \|^2_{2} $, and
the second inequality follows from $ \underline{\lambda}^{1\over 2} \| \hat{\bm \theta}-\bm \theta^* \|_{2} \leq{1 \over T} \sum_{t=1}^T \|f^*_t - \hat f_{t} \|_{L_2} $.
\end{proof}

\begin{proof}[Proof of Proposition \ref{lem:1}]
For any $\bm \theta \in \mathbb R^p \setminus \{ \bm \theta^* \} $ (notice that \ref{asm:eigen} implies that $\bm \theta^*$ is unique),
define the standardized parameter vector \( \bm v = { (\bm \theta_*-\bm \theta) / \sqrt{ {1 \over T }\sum_{t=1}^T \| f^*_{t} - f_{\bm \theta\,t} \|^2_{L_2}  } } \) and note 
\begin{align*}
	& {1 \over T} \sum_{t=1}^T ( f^*_{t} - f_{\bm \theta\,t} )^2  =  
	{ {1 \over T} \sum_{t=1}^T ( f^*_{t} - f_{\bm \theta\,t} )^2 \over {1 \over T} \sum_{t=1}^T\| f^*_{t} - f_{\bm \theta\,t} \|^2_{L_2}  } {1 \over T} \sum_{t=1}^T \| f^*_{t} - f_{\bm \theta\,t} \|^2_{L_2} \\
	& \quad = 
	{1 \over T} \sum_{t=1}^T ( \bm X_t' \bm v )^2 \; {1 \over T} \sum_{t=1}^T \| f^*_{t} - f_{\bm \theta\,t} \|^2_{L_2} 
	\geq  
	{\kappa_1^2 \over K_{\bm \Sigma} } {1 \over T } \sum_{t=1}^T \mathbbm{1}_{ \left\{ | \bm X_t' \bm v| \geq \kappa_1 K_{\bm \Sigma}^{-1/2}  \right\} } \; {1 \over T} \sum_{t=1}^T \| f^*_{t} - f_{\bm \theta\,t} \|^2_{L_2} ~.
\end{align*}	
Let \( g_{\bm v\,t} = \mathbbm{1}_{ \{ |\bm X_t' \bm v| \geq \kappa_1 K_{\bm \Sigma}^{-1/2} \} } \), define $V = \{ \bm v \in \mathbb R^p : {1\over T} \mathbb E [ \sum_{t=1}^T ( \bm X_t' \bm v )^2 ] = 1 \} $ and note that
\[
	{1 \over T} \sum_{t=1}^T g_{\bm v\,t}
	={1 \over T} \sum_{t=1}^T \mathbb E g_{\bm v\,t} + g_{\bm v\,t} - \mathbb E g_{\bm v\,t} 
	\geq{1 \over T} \sum_{t=1}^T \mathbb Eg_{\bm v\,t} - \sup_{\bm v \in V} \left| {1 \over T} \sum_{t=1}^T  g_{\bm v\,t} - \mathbb E g_{\bm v\,t} \right| ~,
\]
since the standardized parameter vector $\bm v$ belongs to $V$.
Let $V_i = \{ \bm v \in \mathbb R^p : \|\bm v-\bm v_i\|_2 \leq \delta \}$ with $\bm v_i \in V$ for $i=1,\ldots,N_\delta$ denote a $\delta$-covering of $V$. 
Then, we have that
\begin{align*}
	& 
	\mathbb P\left( \sup_{\bm v \in V} \left| {1\over T} \sum_{t=1}^T  g_{\bm v\,t}- \mathbb E g_{\bm v\,t} \right| > \varepsilon \right) 
	\leq \sum_{i=1}^{N_\delta} \mathbb P\left( \sup_{\bm v \in V_i} \left| {1\over T} \sum_{t=1}^T g_{\bm v\,t} - \mathbb E g_{\bm v\,t} \right| > \varepsilon \right) \\
	&
	\quad \leq \sum_{i=1}^{N_\delta} \mathbb P\left( \left| {1\over T} \sum_{t=1}^T g_{i\,t}- \mathbb E g_{i\,t} \right| > {\varepsilon\over 2} \right) \\
	&
	\quad + \sum_{i=1}^{N_\delta} \mathbb P\left( \sup_{\bm v \in V_i} \left| {1\over T} \sum_{t=1}^T g_{\bm v\,t}- \mathbb Eg_{\bm v\,t} 
	- \left( {1\over T} \sum_{t=1}^T g_{i\,t} - \mathbb E g_{i\,t} \right) \right| > {\varepsilon \over 2} \right) ~,
\end{align*}
where $g_{i\,t} = g_{\bm v_i\,t}$.
Proposition \ref{prop:l1:S} establishes that $(i)$ for each $\bm v \in V_i$ we have \( |g_{\bm v\,t} -  g_{i\,t}| \leq \bar g_{i\,t} \) where $ \bar g_{i\,t}$ is defined in that proposition
and $(ii)$ there exists a positive constant $K_1$ (that does not depend on $i$, $t$ and $p$) such that for all $\delta < {K_{\bm \Sigma}^{-1/2} / ( 2 \overline{\lambda}^{1/2} ) }$ we have that $\mathbb E \bar g_{i\,t} \leq K_1 p^{1/2} \delta$.
Set $\delta = \varepsilon / (8 K_1 p^{1/2})$ and note that for all $\varepsilon < 4 {K_{\bm \Sigma}^{-1/2} K_1 p^{1/2}/ \overline{\lambda}^{1/2} }$ 
\begin{align*}
	& \mathbb P\left( \sup_{\bm v \in V_i} \left| {1\over T} \sum_{t=1}^T g_{\bm v\,t}- \mathbb Eg_{\bm v\,t} 
	- \left( {1\over T} \sum_{t=1}^T g_{i\,t} - \mathbb E g_{i\,t} \right) \right| > {\varepsilon \over 2} \right) \\
	& \quad =
	\mathbb P\left( \sup_{\bm v \in V_i} \left| {1\over T} \sum_{t=1}^T ( g_{\bm v\,t} - g_{i\,t} )
	- ( \mathbb E g_{\bm v\,t} - \mathbb E g_{i\,t} ) \right| > {\varepsilon \over 2} \right) \leq \mathbb P\left( {1\over T} \sum_{t=1}^T (\bar g_{i\,t} + \mathbb E \bar g_{i\,t}) > { \varepsilon \over 2} \right) \\
	& \quad = \mathbb P\left( {1\over T} \sum_{t=1}^T (\bar g_{i\,t} - \mathbb E \bar g_{i\,t}) > { \varepsilon \over 2} - {2\over T} \sum_{t=1}^T \mathbb E \bar g_{i\,t} \right) 
	\stackrel{(a)}{\leq} \mathbb P\left( {1\over T} \sum_{t=1}^T (\bar g_{i\,t}- \mathbb E \bar g_{i\,t}) > {\varepsilon \over 4} \right)  ~,
\end{align*}
where $(a)$ follows from the fact that $\mathbb E \bar g_{i\,t} \leq \varepsilon/8$.
Finally, we have that 
\begin{align*}
	& \mathbb P\left( \sup_{\bm v \in V} \left| {1\over T} \sum_{t=1}^T g_{\bm v\,t} - \mathbb E g_{\bm v\,t} \right| > \varepsilon \right) \\
	& \quad \leq \sum_{i=1}^{N_\delta} \mathbb P\left( \left| {1\over T} \sum_{t=1}^T g_{i\,t}- \mathbb E g_{i\,t} \right| > {\varepsilon\over 2} \right) 
	+ \sum_{i=1}^{N_\delta} \mathbb P\left( \left| {1\over T} \sum_{t=1}^T \bar g_{i\,t} - \mathbb E \bar g_{i\,t} \right| > {\varepsilon \over 4} \right) \\
	& \quad \leq N_\delta \max_{1 \leq i \leq N_{\delta}} \mathbb P\left( \left| {1\over T} \sum_{t=1}^T Z'_{i\,t} \right| > {\varepsilon\over 2} \right) 
	+ N_\delta \max_{1 \leq i \leq N_{\delta}}\mathbb P\left( \left| {1\over T} \sum_{t=1}^T Z''_{i\,t} \right| > {\varepsilon \over 4} \right) ~,
\end{align*}
where $Z_{i\,t}' = g_{i\,t}- \mathbb E g_{i\,t} $ and $Z_{i\,t}'' = \bar g_{i\,t} - \mathbb E \bar g_{i\,t}$.
We have that
\begin{align*}
    N_\delta \max_{1 \leq i \leq N_{\delta}} 
    \mathbb P\left( \left| {1\over T} \sum_{t=1}^T Z'_{i\,t} \right| > {\varepsilon\over 2} \right) 
    \stackrel{(a)}{\leq} \left( 1+{16 K_1 p^{1/2} \over \underline{\lambda}^{1/2} \varepsilon } \right)^p 
    \max_{1 \leq i \leq N_{\delta}} 
    \mathbb P\left( \left| {1\over T} \sum_{t=1}^T Z'_{i\,t} \right| > {\varepsilon\over 2} \right) ~,
\end{align*}
where $(a)$ follows from the fact that the $\delta$-covering number $N_\delta$ of an Euclidian sphere of radius $C$ in $\mathbb R^p$ satisfies $N_\delta \leq (1 + (2 C)/ \delta )^p$ \citep[Corollary 4.2.13]{vershynin:2018},
and that the covering number of $V$ is smaller than the covering number of $\{ \bm v \in \mathbb R^p: \| \bm v \|_2 \leq \underline{\lambda}^{-1/2} \}$ since $V \subset \{ \bm v \in \mathbb R^p: \| \bm v \|_2 \leq \underline{\lambda}^{-1/2} \}$.
Note that $Z'_{i\,t}$ inherits the mixing properties of $(Y_t,\bm X_t')'$ and satisfies $\|Z'_{i\,t}\|_{L_\infty} \leq 1$.
It follows from Proposition \ref{prop:l1:conc} that for all $T$ sufficiently large and for the choice of \( \varepsilon'_T \) spelled out in that proposition
that $\varepsilon'_T \leq 4 {K_{\bm \Sigma}^{-1/2} K_1 p^{1/2} / \overline{\lambda}^{1/2} } \wedge \kappa_2/2 $ and 
\begin{equation}\label{eqn:l1:conc1}
	\left( 1+{16 K_1 p^{1/2} \over \underline{\lambda}^{1/2} \varepsilon'_T } \right)^p 
    \max_{1 \leq i \leq N_{\delta}} 
    \mathbb P\left( \left| {1\over T} \sum_{t=1}^T Z'_{i\,t} \right| > { \varepsilon_T' \over 2} \right) 
	\leq {4 \over T} + o\left({1\over T}\right) ~.
\end{equation}
Using analogous arguments, we have that for all $T$ sufficiently large and for the choice of \( \varepsilon''_T \) spelled out in Proposition \ref{prop:l1:conc}
that $\varepsilon''_T \leq 4 {K_{\bm \Sigma}^{-1/2} K_1 p^{1/2} / \overline{\lambda}^{1/2} } \wedge \kappa_2/2 $ and 
\begin{equation}\label{eqn:l1:conc2}
	N_\delta \max_{1 \leq i \leq N_{\delta}} 
	 \mathbb P\left( \left| {1\over T} \sum_{t=1}^T Z''_{i\,t} \right| > {\varepsilon''_T \over 4} \right) \leq
	\left( 1+{16 K_1 p^{1/2} \over \underline{\lambda}^{1/2} \varepsilon''_T } \right)^p 
    \max_{1 \leq i \leq N_{\delta}} 
    \mathbb P\left( \left| {1\over T} \sum_{t=1}^T Z''_{i\,t} \right| > { \varepsilon_T'' \over 4} \right) 
	\leq {4 \over T} + o\left({1 \over T}\right) ~.
\end{equation}
The inequalities in \eqref{eqn:l1:conc1} and \eqref{eqn:l1:conc2} imply that for all $T$ sufficiently large we can pick $\varepsilon_T = \varepsilon_T' \wedge \varepsilon_T''$ to obtain
\begin{equation*}
	\sup_{\bm v \in V} \left| {1 \over T} \sum_{t=1}^T  g_{\bm v\,t} - \mathbb E g_{\bm v\,t} \right|
	\leq { \kappa_2 \over 2 } ~.
\end{equation*}
with probability at least $1-8T^{-1} - o(T^{-1})$.
The claim of the proposition follows after noting that with probability at least $1-8T^{-1} - o(T^{-1})$ we have
\begin{align*}
	& { \kappa_1^2 \over K_{\bm \Sigma} } {1 \over T} \sum_{t=1}^T \mathbbm{1}_{ \{ | \bm X_t' \bm v| \geq \kappa_1 K_{\bm \Sigma}^{-1/2}\} } \; {1 \over T} \sum_{t=1}^T \| f^*_{t} - f_{\bm \theta\,t} \|^2_{L_2} \\
	& \quad \geq { \kappa_1^2 \over K_{\bm \Sigma} } \left( {1 \over T} \sum_{t=1}^T \mathbb P( | \bm X_t' \bm v| \geq \kappa_1 K_{\bm \Sigma}^{-1/2} ) - \sup_{\bm v \in V} \left| {1 \over T} \sum_{t=1}^T  g_{\bm v\,t} - \mathbb E g_{\bm v\,t} \right| \right)
	\; {1 \over T} \sum_{t=1}^T \| f^*_{t} - f_{\bm \theta\,t} \|^2_{L_2} \\
	& \quad \stackrel{(a)}{\geq} {\kappa_1^2 \over K_{\bm \Sigma} } \left( \kappa_2 - { \kappa_2 \over 2 } \right) {1 \over T} \sum_{t=1}^T \| f^*_{t} - f_{\bm \theta\,t} \|^2_{L_2} 
	\geq {\kappa_1^2 \kappa_2 \over 2 K_{\bm \Sigma} } {1 \over T} \sum_{t=1}^T \| f^*_t - f_{\bm \theta\,t} \|^2_{L_2} ~,
\end{align*}
where $(a)$ follows from the fact that
$ \mathbb P( | \bm X_t' \bm v| \geq \kappa_1 K_{\bm \Sigma}^{-1/2} )  \geq \mathbb P( | \bm X_t' \bm v| \geq \kappa_1 \| \bm X_t'\bm v \|_{L_2} ) \geq \kappa_2 $
since $ \| \bm X_t'\bm v \|_{L_2} = { \| \bm X_t' (\bm \theta^* -\bm \theta )\|_{L_2} / \sqrt{ {1 \over T }\sum_{t=1}^T \| \bm X_t' (\bm \theta^*-\bm \theta) \|^2_{L_2}  } } \geq K^{-1/2}_{\bm \Sigma} $.
\end{proof}

\begin{proof}[Proof of Proposition \ref{lem:2}]
Define \( \bm \nu_t = \mathbb E \left[ (Y_t - f^*_t)\bm X_t  \right] \) and note that Lemma \ref{lemma:opt} implies 
\[
	\sum_{t=1}^T \bm \nu_t' ( \bm \theta^*-\bm \theta) = \mathbb E \left( \sum^T_{t=1}(Y_t-f^*_t)(f^*_t-f_{\bm \theta\,t}) \right) = 0 \text{ for any } \bm \theta \in \mathbb R^p ~.
\]
For any $\bm \theta \in \mathbb R^p \setminus \{ \bm \theta^* \}$ we have that (notice that \ref{asm:eigen} implies that $\bm \theta^*$ is unique)
\[
	\mathbb P \left( { \left| \sum^T_{t=1}(Y_t-f^*_t)(f^*_t-f_{\bm \theta\,t})\right| \over \sum^T_{t=1}\|f^*_t-f_{\bm \theta\,t}\|_{L_2} } > \varepsilon \right) \leq 
	\mathbb P \left( \sup_{\bm \theta \in \mathbb R^p \setminus \{ \bm \theta^* \} } { \left| \sum^T_{t=1}(Y_t-f^*_t)(f^*_t-f_{\bm \theta\,t} ) \right|\over \sum^T_{t=1}\|f^*_t-f_{\bm \theta\,t}\|_{L_2} } > \varepsilon \right) ~.
\]
Define \(\bm v = {(\bm \theta^*-\bm \theta)}/({\frac{1}{T}\sum^T_{t=1} \| f^*_t -f_{\bm \theta\,t}\|_{L_2}} ) \) for any $\bm \theta \in \mathbb R^p \setminus \{ \bm \theta^* \}$ and note that 
\begin{equation*}
	\|\bm v\|_2 = \frac{\|\bm \theta^*-\bm \theta\|_2}{\frac{1}{T}\sum^T_{t=1}\sqrt{(\bm \theta^*-\bm \theta)'\mathbb{E}(\bm X_t \bm X_t')(\bm \theta^*-\bm \theta)}} \leq \underline{ \lambda }^{-1/2}  ~.
\end{equation*}
Then we have that 
\begin{align*}
	\sum^T_{t=1}\frac{\left| (Y_t-f^*_t)(f^*_t-f_{\bm \theta\,t}) \right| }{\sum^T_{t=1}\| f^*_t - f_{\bm \theta\,t} \|_{L_2}} 
	&= \sum^T_{t=1} { \left| (Y_t-f^*_t) \bm X_{t}'(\bm \theta^*-\bm \theta) \right| \over \sum^T_{t=1} \|f^*_t - f_{\bm \theta\,t} \|_{L_2} } 
	= \left| {1 \over T} \sum^T_{t=1} (Y_t-f^*_t) \bm X_{t}' \bm v \right| \\
	&= \left| {1 \over T} \sum^T_{t=1} [ (Y_t-f^*_t) \bm X_{t}' - \bm \nu_t' ] \bm v \right| = \left| { 1 \over T} \sum^T_{t=1} \bm U_{t}'\bm v \right| ~,
\end{align*}
where \( \bm U_t=  {(Y_t-f^*_t)} \bm X_{t} - \bm \nu_{t} \).
Next, we have 
\begin{align*}
	& \mathbb P \left( \sup_{\bm \theta \in \mathbb R^p \setminus \{ \bm \theta^* \} } { \left| \sum^T_{t=1}(Y_t-f^*_t)(f^*_t - f_{\bm \theta\,t} ) \right| \over \sum^T_{t=1}\|f^*_t-f_{\bm \theta\,t}\|_{L_2} } > \varepsilon \right) 
	\leq \mathbb{P}\left( \sup_{\bm v: \|\bm v\|_2 \leq \underline{\lambda}^{-1/2}} \left| \frac{ 1 }{T}\sum^T_{t=1} \bm U'_{t}\bm v \right| >\varepsilon\right) \\
	& \quad \leq  \mathbb{P}\left( \sup_{\bm v: \|\bm v\|_2 \leq \underline{\lambda}^{-1/2}} \left\| \frac{1}{T}\sum^T_{t=1} \bm U_{t} \right\|_2 \|\bm v\|_2 >\varepsilon\right) 
	\leq \mathbb{P}\left(  \left\| \frac{1}{T}\sum^T_{t=1} \bm U_t \right\|_2 >  \underline{\lambda}^{1/2} \varepsilon \right) ~.
\end{align*}
Note that $\{ \bm U_{t} \}$ is mean zero, 
satisfies $\| U_{i\,t} \|_{L_2} \leq K_m$ and
\begin{align*}
	\| U_{i\,t} \|_{L_{r_m}} & \leq \| (Y_t-f^*_t) X_{i\,t} \|_{L_{r_m}} + \| \nu_{i\,t} \|_{L_{r_m}} = \| (Y_t-f^*_t) X_{i\,t} \|_{L_{r_m}} + \| (Y_t-f^*_t) X_{i\,t} \|_{L_{1}} \leq 2 K_m 
\end{align*}
because of \ref{asm:moments},
and inherits the mixing properties of $\{ (Y_t,\bm X_t)' \}$ spelled out in \ref{asm:mixing}.
Proposition \ref{prop:l2:conc} then implies that, for all $T$ sufficiently large, we have 
\[
	\sup_{\bm \theta \in \mathbb R^p \setminus \{ \bm \theta^* \} } { \left| \sum^T_{t=1}(Y_t-f^*_t)(f_{\bm \theta\,t}-f^*_t ) \right| \over \sum^T_{t=1}\|f_{\bm \theta\,t} - f^*_t\|_{L_2} } 
	\leq  12 \, \sqrt{ K_{\sigma^2} \over \underline{\lambda} } \, \sqrt{ {p \log (T) \over T}  } ~,
\]
with probability at least $1- 3 K_p (2 K_m)^{r_m} / ( K^{1\over 2}_{\sigma^2} \log (T) ) - o( \log(T)^{-1} )$ where $K_{\sigma^2}$ is the constant $\sigma^2$ defined in that proposition. The claim of the proposition then follows.
\end{proof}

\begin{proof}[Proof of Theorem \ref{thm:erm:stationary}]
We begin by showing that when \ref{asm:moments} is satisfied we have $\lambda_{\max}(\bm \Sigma) \leq K_m^2 p $.
Let $\bm \Sigma = \mathbb E( \bm X_t \bm X_t' )$ and let $\bm \Sigma_{i \bullet}$ denote the $i$-th row of $\bm \Sigma$.
{Then, 
\begin{equation*}
\lambda_{\max}(\bm \Sigma) = \sup_{\bm{x} \in \mathbb R^p: \lVert\bm{x}\rVert_2 = 1} \lVert\bm \Sigma \bm{x}\rVert_2 
= \sup_{\bm{x} \in \mathbb R^p: \lVert\bm{x}\rVert_2 = 1} \sqrt{\sum_{i = 1}^p ( \bm \Sigma_{i\,\bullet} \bm x)^2  } 
\leq \sup_{\bm{x} \in \mathbb R^p: \lVert\bm{x}\rVert_2 = 1} \sqrt{\sum_{i = 1}^p
\lVert\bm{\Sigma}_{i\, \bullet}\rVert_2^2  \lVert\bm{x}\rVert_2^2  }
\end{equation*}
\begin{equation*}
= \sqrt{\sum_{i = 1}^p \lVert\bm{\Sigma}_{i\, \bullet}\rVert_2^2 } 
\leq \sqrt{\sum_{i = 1}^p \lVert K_m^2 \bm{1}_p\rVert_2^2 } 
= K_m^2 \sqrt{\sum_{i = 1}^p p } = K_m^2 p ~,
\end{equation*}
where $\bm{1}_p$ is $p$-dimensional vector with entries equal to one.} \\
The proof is similar to the one of Theorem \ref{thm:erm} and we only highlight the main differences.\\
In the proof of Proposition \ref{lem:1} define $\bm Z_t = \bm \Sigma^{-{1\over 2}} \bm X_t$ and $\bm v = \bm \Sigma^{1\over 2} (\bm \theta_* - \bm \theta) / \| f_t^* - f_{\bm \theta\,t} \|_{L_2}$.
Then \ref{asm:smallball} and \ref{asm:stat} imply $\mathbb P( | \bm Z_t' \bm v| \geq \kappa_1 \| \bm Z_t'\bm v \|_{L_2} ) \geq \kappa_2 $.
Thus, in that proposition the function $g_{\bm v\,t}$ can be defined as $ \mathbbm 1_{\{ | \bm Z_t' \bm v| \geq \kappa_1 \| \bm Z_t'\bm v \|_{L_2} \}} $.
Proposition \ref{prop:l1:S} can then be modified and it is straightforward to see that there exists a $\bar g_{i\,t}$ function such that for all $\delta < 1/2 $ we have
$ \mathbb E \bar g_{i\,t} \leq K_1 p^{1\over 2} \delta$ for some positive constant $K_1$. If we set $\delta = \varepsilon / (8 K_1 p^{1\over 2}) $ for all $\varepsilon < 4 K_1 p^{1\over2} $ we get, following the same steps as in Proposition \ref{lem:1} and noting that $\| \bm v \|_2=1$, that
\begin{align*}
	& \mathbb P\left( \sup_{\bm v \in V} \left| {1\over T} \sum_{t=1}^T g_{\bm v\,t} - \mathbb E g_{\bm v\,t} \right| > \varepsilon \right) \\
	& \quad \leq 
	\left( 1+{16 K_1 p \over \varepsilon } \right)^p 
	\max_{i=1,\ldots,N_\delta} \left[ \mathbb P\left( \left| {1\over T} \sum_{t=1}^T Z'_{i\,t} \right| > { \varepsilon \over 2} \right) + \mathbb P\left( \left| {1\over T} \sum_{t=1}^T Z''_{i\,t} \right| > { \varepsilon \over 4} \right) \right] ~.
\end{align*}
Finally, Proposition \ref{prop:l1:conc} implies that for all $T$ sufficiently large and any $\bm \theta \in \mathbb R^p$,
\[
	{1 \over T}\sum_{t=1}^T ( f^*_{t} - f_{\bm \theta\,t} )^2 \geq { \kappa_1^2 \kappa_2 \over 2} {1 \over T} \sum_{t=1}^T \| f^*_t - f_{\bm \theta\,t} \|^2_{L_2} ~
\]
holds with probability at least $1-8 T^{-1} - o(T^{-1})$.\\
{In the proof of Proposition \ref{lem:2} define $\bm v = \bm \Sigma^{1\over 2} {(\bm \theta^*-\bm \theta)}/\| f^*_t -f_{\bm \theta\,t}\|_{L_2}$ and $\bm U_{t} = (Y_t - f^*_t) \bm Z_t$.
Following the steps of the proof of Proposition \ref{lem:2} we have that for any $\bm \theta \in \mathbb R^p \setminus \{ \bm \theta^* \}$
\begin{align*}
	& \mathbb P \left( \sup_{\bm \theta \in \mathbb R^p \setminus \{ \bm \theta^* \} } { \left| \sum^T_{t=1}(Y_t-f^*_t)(f^*_t - f_{\bm \theta\,t} ) \right| \over \sum^T_{t=1}\|f^*_t-f_{\bm \theta\,t}\|_{L_2} } > \varepsilon \right) 
	\leq \mathbb{P}\left(  \left\| \frac{1}{T}\sum^T_{t=1} \bm U_t \right\|_2 > \varepsilon \right) ~,
\end{align*}
where we have used the fact that $\| \bm v \|_2 = 1$.
Note that $\{ \bm U_t \}$ is mean zero,
satisfies $\| U_{i\,t} \|_{L_{r_m}} \leq K_m$ for each $i=1,\ldots,p$ because of \ref{asm:moments:stat}
and inherits the mixing properties of $\{ (Y_t,\bm X_t)' \}$ spelled out in \ref{asm:mixing}.}
Applying Proposition \ref{prop:l2:conc} we have that for all $T$ sufficiently large and any $\bm \theta \in \mathbb R^p \setminus \{ \bm \theta^* \}$,
\[
	{1 \over T} \sum^T_{t=1}(Y_t-f^*_t)(f^*_t- f_{\bm \theta\,t} ) \leq  12 \, \sqrt{ K'_{\sigma^2} } {1 \over T}\sum^T_{t=1}\| f^*_t- f_{\bm \theta\,t}\|_{L_2} \, \sqrt{ {p \log (T) \over T} } 
\]
holds with probability at least $1- 3 K_p K_{m}^{r_m} / ( (K_{\sigma^2}')^{1\over 2} \log (T) ) - o(\log(T)^{-1})$ with \( K'_{\sigma^2} = K_m^2 \left( 1 + 32 {r_m \over r_m -2} \sum_{l=1}^\infty \alpha(l)^{1-{2\over r_m} } \right) \).\\
Finally, in the proof of Theorem \ref{thm:erm} we can replace condition \eqref{eqn:mu_is_far} with 
\begin{equation*}
	 \| f^*_t - f_{\bm \theta\,t} \|_{L_2}  > { 48 (K'_{\sigma^2})^{1\over 2} \over \kappa_1^2 \kappa_2 } \sqrt{ {p \log (T) \over T}} ~.
\end{equation*}
Following the same steps as in the proof there we obtain the claim.
\end{proof}

\begin{proof}[Proof of Theorem \ref{thm:erm:oos}]
We begin by introducing the out-of-sample risk for the ``ghost'' out-of-sample observations.
Let  $\{ (Y^G_t, (\bm X^G_t)')' \}_{t=T+1}^{T+H}$ denote a sequence of observations from the $\{ (Y_t, \bm X_t')' \}$ process that is independent of $\{ (Y_t, \bm X_t')' \}_{t=1}^{T}$.
Then define 
\begin{align*}
	R^G_\mathsf{oos}( {\bm \theta}) & =
	\mathbb E\left[ {1 \over H} \sum_{t=T+1}^{T+H} (Y^G_t - f^G_{\bm \theta\,t} )^2 \right] \\
	R^G_\mathsf{oos}( \hat{\bm \theta}) & =
	\mathbb E\left[ \left. {1 \over H} \sum_{t=T+1}^{T+H} (Y^G_t - \hat f^G_t )^2 \right| (Y_T,\bm X_T')' , \ldots , (Y_1,\bm X_1')' \right] ~,
\end{align*}
where $f^G_{\bm \theta\,t} = {\bm \theta}' \bm X^G_t$ and $\hat f^G_{t} = \hat{\bm \theta}' \bm X^G_t$ with $\hat {\bm \theta} = \hat {\bm \theta}( \{ (Y_1,\bm X_1')' , \ldots , (Y_T,\bm X_T')' \} )$.
Notice that clearly $R^G_\mathsf{oos}( {\bm \theta}) = R_\mathsf{oos}( {\bm \theta})$.
We may then note that 
\begin{align*}
 	R_\mathsf{oos}(\hat {\bm \theta}) - R_\mathsf{oos}(\bm \theta^*) 
	& \leq | R_\mathsf{oos}(\hat {\bm \theta}) - R^G_\mathsf{oos}(\hat {\bm \theta})| + | R^G_\mathsf{oos}(\hat {\bm \theta})  - R_\mathsf{oos}(\bm \theta^*) | \\ 
	& = | R_\mathsf{oos}(\hat {\bm \theta}) - R^G_\mathsf{oos}(\hat {\bm \theta})| + | R^G_\mathsf{oos}(\hat {\bm \theta})  - R^G_\mathsf{oos}(\bm \theta^*) | ~.
\end{align*}
The claim of the theorem follows from the fact that if for some $\varepsilon_1 > 0$, $\varepsilon_2>0$, $\delta_1 \in (0,1)$ and $\delta_2 \in (0,1)$  we have that
\begin{align}
	& \mathbb P( | R^G_\mathsf{oos}(\hat {\bm \theta})  - R^G_\mathsf{oos}(\bm \theta^*)| \geq \varepsilon_1 ) \leq \delta_1 \label{eqn:roos:1} \\
	& \mathbb P\left( \left. | R_\mathsf{oos}(\hat {\bm \theta}) - R^G_\mathsf{oos}(\hat {\bm \theta})| \geq \varepsilon_2 \right| | R_\mathsf{oos}^G(\hat {\bm \theta})  - R^G_\mathsf{oos}(\bm \theta^*)| \leq \varepsilon_1 \right) \leq \delta_2 ~, \label{eqn:roos:2} 
\end{align}
then it follows from the union bound and the total probability theorem that \( R_\mathsf{oos}(\hat {\bm \theta}) - R_\mathsf{oos}(\bm \theta^*) \leq \varepsilon_1 + \varepsilon_2 \) with probability at least $1-2\delta_1-\delta_2$. 
Theorem \ref{thm:erm:stationary} implies that for all $T$ sufficiently large \eqref{eqn:roos:1} holds for the choice of $\varepsilon_1$ and $\delta_1$ implied by the theorem. Thus, this proof focuses on establishing that \eqref{eqn:roos:2} holds.
Denote by $\mathcal E = \{ R^G_\mathsf{oos}(\hat {\bm \theta})  - R^G_\mathsf{oos}(\bm \theta^*) \leq 1 \}$
and note that conditional on $\mathcal E$ we have that $1 \geq R^G_\mathsf{oos}(\hat {\bm \theta})  - R^G_\mathsf{oos}(\bm \theta^*) = \| f^G_{\bm \theta^*\,t} - f^G_{\hat {\bm \theta}\,t} \|^2_{L_2} > \underline{\lambda} \| \bm \theta^* - \hat {\bm \theta} \|^2_2 $.
Let $\mathbb E_T(\cdot) = \mathbb E(\cdot \lvert \mathcal I_T)$ be the expectation conditional
on information up to time $T$, with $\mathcal I_T$ the information set at time $T$.
This implies that for $r = (1/\underline{\lambda})^{1\over 2} $ we have
\begin{align*}
	&| \mathbb E_T (Y_{T+h}-f_{\hat {\bm \theta} \,T+h})^2 - \mathbb E_T( Y^G_{T+h}- f^G_{\hat{\bm \theta} \,T+h})^2  | \\
	& \quad \leq \sup_{\bm \theta \in B_2(\bm \theta^*,r)}  | \mathbb E_T (Y_{T+h}-f_{\bm \theta\,T+h})^2 - \mathbb E_T( Y^G_{T+h}-f^G_{\bm \theta \,T+h})^2 | \\
	& \quad = \sup_{\bm \theta \in B_2(\bm \theta^*,r)}  | \mathbb E_T (Y_{T+h}-f_{\bm \theta\,T+h})^2 - \mathbb E( Y_{T+h}-f_{\bm \theta \,T+h})^2 | \\
	& \quad \leq | \mathbb E_T (Y_{T+h}-f_{\bm \theta^*\,T+h})^2  - \mathbb E (Y_{T+h}-f_{\bm \theta^*\,T+h})^2 | + \sup_{\bm v \in B_2(\bm 0,r)} \bm v' [ \mathbb E_T ( \bm X_{T+h} \bm X_{T+h}' )  - \mathbb E( \bm X_{T+h} \bm X_{T+h}' ) ] \bm v \\
	& \quad + 2\sup_{\bm v \in B_2(\bm 0,r)} | [ \mathbb E_T( (Y_{T+h}-f_{\bm \theta^*\,T+h}) \bm X_{T+h} ) - \mathbb E( (Y_{T+h}-f_{\bm \theta^*\,T+h}) \bm X_{T+h} ) ]' \bm v | ~.
\end{align*}
It follows from Ibragimov's inequality that
%\begin{align*}
%	& \| \mathbb E_T (Y_{T+h}-f_{\bm \theta^*\,T+h})^2  - \mathbb E (Y_{T+h}-f_{\bm \theta^*\,T+h})^2 \|_{L_1} \leq 6 \alpha(h)^{1\over 2} \| (Y_{T+h}-f_{\bm \theta^*\,T+h})^2 \|_{L_2} \leq 6 \alpha(h)^{1\over 2} K^2_m \\
%	& \| [ \mathbb E_T( (Y_{T+h}-f_{\bm \theta^*\,T+h}) \bm X_{T+h} ) - \mathbb E( (Y_{T+h}-f_{\bm \theta^*\,T+h}) \bm X_{T+h} ) ]' \bm v \|_{L_1} \leq 6 \alpha(h)^{1\over 2} \| [(Y_{T+h}-f_{\bm \theta^*\,T+h}) \bm X_{T+h} ]' \bm v \|_{L_2} \\
%	& \quad \leq 6 \alpha(h)^{1\over 2} \left[ \mathbb E \max_i |(Y_{T+h}-f_{\bm \theta^*\,T+h}) X_{i\,T+h}|^2 \| \bm v \|^2_1 \right]^{1\over 2} \leq 6 \alpha(h)^{1\over 2} \sqrt{p} K^2_m \| \bm v\|_1 
%	\leq 6 \alpha(h)^{1\over 2} p K^2_m \| \bm v\|_2\\
%	& \| \bm v' [ \mathbb E_T ( \bm X_{T+h} \bm X_{T+h}' )  - \mathbb E( \bm X_{T+h} \bm X_{T+h}' ) ] \bm v  \|_{L_1} \leq 6 \alpha(h)^{1\over 2} \| \bm v' \bm X_{T+h} \bm X_{T+h}' \bm v  \|_{L_2} \\
%	& \quad \leq 6 \alpha(h)^{1\over 2} \left[ \mathbb E \max_{i,j} |(Y_{T+h}-f_{\bm \theta^*\,T+h})^2 X_{i\,T+h}X_{j\,T+h}|^2 \| \bm v \|_2^2 \right]^{1\over 2} 
%	\leq 6 \alpha(h)^{1\over 2} p K_m^2 \| \bm v \|_2 ~.
%\end{align*}
\begin{align*}
& \| \mathbb E_T (Y_{T+h}-f_{\bm \theta^*\,T+h})^2  - \mathbb E (Y_{T+h}-f_{\bm \theta^*\,T+h})^2 \|_{L_1} \leq 6 \alpha(h)^{1\over 2} \| (Y_{T+h}-f_{\bm \theta^*\,T+h})^2 \|_{L_2} \leq 6 \alpha(h)^{1\over 2} K^2_m ~, \\
& \| \sup_{\bm v \in B_2(\bm 0, r)} \bm v' [ \mathbb E_T ( \bm X_{T+h} \bm X_{T+h}' )  - \mathbb E( \bm X_{T+h} \bm X_{T+h}' ) ] \bm v  \|_{L_1} \\
& \quad \leq \| \max_{ij} | \left[ \mathbb E_T (\bm{X}_{T+h} \bm{X}_{T + h}') - \mathbb E (\bm{X}_{T + h} \bm{X}_{T + h}') \right]_{ij} | \|_{L_1} \sup_{\bm v \in B_2(\bm 0, r)} \| \bm{v} \|_1^2 \leq 6 \alpha(h)^{1\over 2} \frac{K_m^2}{\underline{\lambda}} p ~,\\
& \| 2 \sup_{\bm v \in B_2(\bm 0, r)} | [ \mathbb E_T( (Y_{T+h}-f_{\bm \theta^*\,T+h}) \bm X_{T+h} ) - \mathbb E( (Y_{T+h}-f_{\bm \theta^*\,T+h}) \bm X_{T+h} ) ]' \bm v | \|_{L_1} \\
& \quad \leq 12 \alpha(h)^{1\over 2} \| (Y_{T+h} - f_{\bm \theta^*\,T+h}) X_{i, T+h} \|_{L_2} \sup_{\bm v \in B_2(\bm 0, r)} \| \bm{v} \|_1 \leq 12 \alpha(h)^{\frac{1}{2}} \frac{K_m^2}{\sqrt{\underline{\lambda}}} \sqrt{p} ~.
\end{align*}
Thus, conditional on $\mathcal E$ and for $T$ sufficiently large we have
\begin{align*}
	\| \mathbb E_T (Y_{T+h}-f_{\hat{\bm \theta}\,T+h})^2 - \mathbb E_T( Y^G_{T+h}- f^G_{\hat{\bm \theta} \,T+h})^2 \|_{L_1} 
	\leq 6 \alpha(h)^{1\over 2} K_m^2 \left( 1 + \frac{p}{\underline{\lambda}} + 2 \sqrt{\frac{p}{\underline{ \lambda }}} \right) 
	\leq 24 \alpha(h)^{1\over 2} { K_m^2 \over \underline{\lambda} } p ~.
\end{align*}
The conditional version of Markov's inequality implies that 
\begin{align*}
	\mathbb P( | R_\mathsf{oos}(\hat {\bm \theta}) - R^G_\mathsf{oos}(\hat {\bm \theta})| \geq
    \varepsilon_2 | \mathcal E ) & \leq {1 \over \varepsilon_2 } {1\over H} \sum_{h=1}^{H} \|
    \mathbb E_T (Y_{T+h}-f_{\hat{\bm \theta}\,T+h})^2 - 
    \mathbb E( Y_{T+h}-f_{\hat{\bm \theta} \,T+h})^2 \|_{L_1} \\
	& \leq {24 \over \varepsilon_2} {K_m^2 \over \underline \lambda} \sum_{l=1}^{\infty} \alpha(l)^{1\over 2} {p \over H} ~,
\end{align*}
which implies the claim of the theorem.
\end{proof}

\begin{proof}[Proof of Lemma \ref{lemma:smallball}]
Let $\bm v=\bm \theta_1 - \bm \theta_2$ and note that the Paley-Zygmund inequality implies that for any $\vartheta \in [0,1]$ we have 
\begin{equation}\label{eqn:paleyzygmund}
	\mathbb P( | \bm v' \bm X_t | > \vartheta^{1\over 2} \| \bm v' \bm X_t \|_{L_2} ) 
	\geq (1-\vartheta)^2 {\mathbb E ( | \bm v' \bm X_t |^2 )^2  \over \mathbb E( | \bm v' \bm X_t |^4 ) } ~.
\end{equation}
Note that \ref{asm:spherical} implies that $\bm X_t$ is elliptical. Then $\bm v' \bm X_t = \sigma_t U$ holds  where $\sigma^2_t = \bm v' \bm \Sigma_t \bm v$ and $U$ is an elliptical random variable with zero mean and unit variance (whose distribution does not depend on $\bm v$ nor $\bm \Sigma_t$).
Thus, we have that the probability in \eqref{eqn:paleyzygmund} is lower bounded by \( [ (1-\vartheta)^2 \mathbb E ( | U |^2 )^2  ] / \mathbb E( | U |^4 ) \),
which implies the claim of the lemma.
\end{proof}

\section{Auxiliary Results}

\setcounter{prop}{0}
\setcounter{lemma}{0}
\setcounter{thm}{0}
\renewcommand\theprop{\thesection.\arabic{prop}}
\renewcommand\thelemma{\thesection.\arabic{lemma}}
\renewcommand\thethm{\thesection.\arabic{thm}}

\begin{prop}\label{prop:l1:S}
Consider the same setup as in Proposition \ref{lem:1}.
Let $V_i = \{ \bm v \in \mathbb R^p : \|\bm v-\bm v_i\|_2 \leq \delta \}$ with $\bm v_i \in V$ for $i=1,\ldots,N_\delta$ denote a $\delta$-covering of the set $V$ for some $\delta < {K_{\bm \Sigma}^{-1/2} / ( 2 \overline{\lambda}^{1/2} ) } $.
Define the function $ g_{\bm v\,t} = \mathbbm{1}_{ \{ | \bm X_t' \bm v | \geq \kappa_1 K_{\bm \Sigma}^{-1/2} \} } $ and let $g_{i\,t} = g_{\bm v_i\,t}$.

Then $(i)$ for all $\bm v \in V_i$ we have that \( |g_{\bm v\,t} - g_{i\,t} | \leq \bar{g}_{i\,t} = \mathbbm{1}_{ \{ \bm X_t \in S_{i} \} } \),
where \( S_{i} = \bigcup_{\bm v \in V_i} \{ \bm x \in \mathbb R^p : | \bm x' \bm v | = \kappa_1 K_{\bm \Sigma}^{-1/2} \} \) 
and $(ii)$ there exists a positive constant $K_1$ that depends on $K_{\bm Z}$, $\underline{\lambda}$, $\overline{\lambda}$ and $\kappa_1$ (and it does not depend on $t$, $i$ or $p$) such that \( \mathbb E \bar{g}_{i\,t} \leq K_1 p^{1/2} \delta \).
\end{prop}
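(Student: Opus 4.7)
For part $(i)$, I would use a continuity/intermediate-value argument. Suppose $g_{\bm v\,t}(\bm X_t) \neq g_{i\,t}(\bm X_t)$ for some $\bm v \in V_i$. The segment $\bm v(s) = (1-s)\bm v_i + s\bm v$, $s \in [0,1]$, lies in the convex set $V_i$, and the map $s \mapsto |\bm X_t' \bm v(s)|$ is continuous. Since the indicator $\mathbbm{1}\{|\bm X_t' \bm v(s)| \geq \kappa_1 K_{\bm \Sigma}^{-1/2}\}$ takes different values at $s=0$ and $s=1$, the intermediate value theorem yields some $s^* \in [0,1]$ with $|\bm X_t'\bm v(s^*)| = \kappa_1 K_{\bm \Sigma}^{-1/2}$, which places $\bm X_t$ in $S_i$ and gives $\bar g_{i\,t} = 1 \geq |g_{\bm v\,t} - g_{i\,t}|$.

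For part $(ii)$, the first step is to describe $S_i$ explicitly. For fixed $\bm x$, as $\bm w$ ranges over $\{\|\bm w\|_2 \leq \delta\}$, the quantity $\bm x'(\bm v_i + \bm w)$ sweeps the interval $[\bm x'\bm v_i - \delta\|\bm x\|_2,\, \bm x'\bm v_i + \delta\|\bm x\|_2]$; hence $\bm x \in S_i$ iff this interval contains $\pm \kappa_1 K_{\bm \Sigma}^{-1/2}$. Writing $c = \kappa_1 K_{\bm \Sigma}^{-1/2}$, this yields $S_i \subseteq S_i^+ \cup S_i^-$ with $S_i^\epsilon = \{\bm x : |\bm x'\bm v_i - \epsilon c| \leq \delta \|\bm x\|_2\}$, and it suffices to bound $\mathbb P(\bm X_t \in S_i^\epsilon)$ separately for each sign.

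Next, Assumption \ref{asm:subelliptic} gives $\mathbb P(\bm X_t \in S_i^\epsilon) \leq K_{\bm X} \int_{S_i^\epsilon} f(\bm x, 0, \bm \Sigma_t, g)\, d\bm x$. Changing variables $\bm y = \bm \Sigma_t^{-1/2}\bm x$ turns the elliptical density into the spherical $c_p g(\|\bm y\|_2^2)$, and (using $\|\bm x\|_2 \leq \overline{\lambda}^{1/2}\|\bm y\|_2$) transforms $S_i^\epsilon$ into a subset of $\{\bm y : |\bm y'\tilde{\bm v}_i - \epsilon c| \leq \delta\overline{\lambda}^{1/2}\|\bm y\|_2\}$ with $\tilde{\bm v}_i = \bm \Sigma_t^{1/2}\bm v_i$. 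From $\bm v_i \in V$ and Assumption \ref{asm:eigen} one gets $\|\tilde{\bm v}_i\|_2^2 = \bm v_i'\bm \Sigma_t \bm v_i \in [K_{\bm \Sigma}^{-1}, K_{\bm \Sigma}]$, and by rotational invariance of the spherical density we may take $\tilde{\bm v}_i = \|\tilde{\bm v}_i\|_2\,(1,0,\ldots,0)'$. Passing to polar coordinates $\bm y = \rho \omega$, the angular part $\omega$ is uniform on $S^{p-1}$ and independent of $\rho$, and for each $\rho > 0$ the set on $\omega$ reduces to $\{\omega : |\omega_1 - \epsilon c/(\|\tilde{\bm v}_i\|_2\rho)| \leq \delta\overline{\lambda}^{1/2}/\|\tilde{\bm v}_i\|_2\}$, an interval in $\omega_1$ of length at most $2\delta\overline{\lambda}^{1/2}K_{\bm \Sigma}^{1/2}$.

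The marginal density of $\omega_1$ under the uniform law on $S^{p-1}$ is proportional to $(1-\omega_1^2)^{(p-3)/2}$, whose maximum equals $\Gamma(p/2)/[\sqrt{\pi}\,\Gamma((p-1)/2)] \leq \sqrt{p/(2\pi)}$ by Wendel's inequality. Hence the angular measure of our set is at most $2\delta\overline{\lambda}^{1/2}K_{\bm \Sigma}^{1/2}\sqrt{p/(2\pi)}$ uniformly in $\rho$; the hypothesis $\delta < K_{\bm \Sigma}^{-1/2}/(2\overline{\lambda}^{1/2})$ is precisely what keeps this interval length below $1$ so the bound is non-trivial. Integrating against the radial density, summing over $\epsilon \in \{\pm 1\}$, and multiplying by $K_{\bm X}$ yields $\mathbb E \bar g_{i\,t} \leq K_1 p^{1/2}\delta$ with $K_1$ depending only on $K_{\bm X}, \overline{\lambda}, \underline{\lambda}$. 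The main obstacle will be arranging the change of variables, rotation, and polar decomposition so that the $\sqrt{p}$ factor arises transparently from the equatorial concentration of $\omega_1$, independently of $\rho$ and of the specific density generator $g$.
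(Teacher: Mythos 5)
Your part $(i)$ is the same convexity/intermediate-value argument as the paper's. For part $(ii)$ your route is genuinely different and, I think, cleaner. The paper works with $S_{i\,\pm}$ as unions of hyperplanes, sandwiches the (rotated, whitened) region between two hyper-cones, splits the intersection into two cone pieces plus a slab in $z_1$, bounds the cone pieces via the generalized spherical transformation and the density of the first polar angle (its Proposition \ref{prop:nspherical}, which uses the nonincreasingness of $g$ to slide the cone to the origin), and bounds the slab via the assumed boundedness of the univariate marginal density. You instead characterize the region exactly as the double shell $\{\bm x : |\bm x'\bm v_i \mp c| \leq \delta\|\bm x\|_2\}$, whiten, and condition on the radius, so that the whole bound reduces to the equatorial concentration of $\omega_1$ for a uniform point on $S^{p-1}$; the $p^{1/2}$ comes out of the same Gautschi/Wendel estimate the paper uses, but in one stroke and uniformly in $\rho$. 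A notable byproduct is that your argument uses neither the nonincreasingness of $g$ nor the bounded-univariate-marginal clause of Assumption \ref{asm:subelliptic} — only spherical symmetry after whitening — and your $K_1$ does not depend on $\kappa_1$ or $g$. Both approaches deliver $\mathbb E\bar g_{i\,t}\leq K_1 p^{1/2}\delta$ with $K_1$ of the same form.

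One caveat you should patch: your claim that the maximum of the $\omega_1$-marginal equals $\Gamma(p/2)/[\sqrt{\pi}\,\Gamma((p-1)/2)]$ holds only for $p\geq 3$; for $p=2$ the density $\propto(1-\omega_1^2)^{-1/2}$ is unbounded at the poles, and the worst-case conditional probability of your interval is $O(\sqrt{\delta})$ rather than $O(\delta)$, so the uniform-in-$\rho$ bound fails there. The paper sidesteps this precisely because its slab region $B_{it}$ is controlled by the bounded marginal density of $Z_1$ rather than of $\omega_1$. Since the relevant regime is large $p$ (and the paper itself only treats $p\geq 2$ "for convenience"), this is a boundary case, but you should either restrict to $p\geq 3$, handle $p\leq 2$ directly, or borrow the paper's slab argument for the near-pole radii.
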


\begin{proof}
$(i)$ We show that \( S_{i} = \bigcup_{\bm v \in V_i} \{ \bm x \in \mathbb R^p : | \bm x' \bm v | = \kappa_1 K_{\bm \Sigma}^{-1/2}\} \)
is the set containing all the vectors $\bm x$ such that the indicator functions $ \mathbbm{1}_{ \{ |\bm x' \bm v | \geq \kappa_1 K_{\bm\Sigma}^{-1/2} \} }$ and 
$ \mathbbm{1}_{ \{ | \bm x' \bm v_i | \geq \kappa_1 K_{\bm\Sigma}^{-1/2} \} } $ are different. 
We do so by showing that the complement of $S_{i}$ is a set of vectors $\bm x$ where the indicator functions are equal.
We establish this by contradiction. Assume $\bm x$ is not in $S_i$ and that the indicator functions $ \mathbbm{1}_{ \{| \bm x' \bm v | \geq \kappa_1 K_{\bm\Sigma}^{-1/2} \}}$ 
and $\mathbbm{1}_{\{ | \bm x'\bm v_i | \geq \kappa_1 K_{\bm\Sigma}^{-1/2} \}} $ are different.
Since $V_i$ is convex there must be an intermediate $\dot {\bm v} \in V_i$ such that $| \bm x'\dot{ \bm v} | = \kappa_1 K_{\bm\Sigma}^{-1/2} $ implying that $\bm x$ is in $S_i$, which leads to a contradiction. \\
$(ii)$ Note that
\begin{equation*}
	S_{i} = \bigcup_{\bm v \in V_i} \{ \bm x \in \mathbb R^p : \bm x'\bm v = \kappa_1 K_{\bm \Sigma}^{-1/2} \} \cup \bigcup_{\bm v \in V_i} \{ \bm x \in \mathbb R^p : \bm x'\bm v = -\kappa_1 K_{\bm \Sigma}^{-1/2} \} 
    = S_{i\,+} \cup S_{i\,-} ~. 
\end{equation*}
In what follows we bound the probability of the event $\{ \bm X_t \in S_{i\,+} \}$ only as the event $\{ \bm X_t \in S_{i\,-} \}$ can be treated analogously. We divide the proof into four steps. \\
$\bf 1.$ We work with an appropriately rotated version of $\bm X_t$ denote by $\bm Z$.
Let $\vartheta$ be the angle between the vector $\bm \Sigma^{1/2}_t \bm v_i$ and $(1,0,\ldots,0)'$ and let $\mathbf R \in \mathbb R^{p \times p} $ be the rotation matrix associated with $\vartheta$.
Recall that: 
$(i)$ $\mathbf R' \mathbf R = \mathbf I_p$;
$(ii)$ $\mathbf R \bm \Sigma^{1/2}_t \bm v_i = \|\bm \Sigma^{1/2}_t \bm v_i\|_2 (1,0,\ldots,0)'$;
$(iii)$ if we define $W_1 = \{ \bm w \in \mathbb R^p : \| \bm w \|_2 \leq 1 \}$ and $W_{2} = \{ \bm w \in \mathbb R^p : \bm w = \mathbf R \bm w^\star \text{ for some } \bm w^\star \in W_1 \}$ then we have that $W_1 = W_2$.
Define $\bm Z = \mathbf R \bm \Sigma^{-1/2}_t \bm X_t$ and note that
\begin{align*}
	\mathbb P( \{ \bm X_t \in S_{i\,+} \} )
	&= \mathbb P\left( \left\{ \bm X_t \in \bigcup_{\bm w \in \mathbb R^p: \| \bm w \|_2 \leq 1 } \{ \bm x \in \mathbb R^p : \bm v_i'\bm x + \delta \bm w'\bm x = \kappa_1 K_{\bm \Sigma}^{-1/2} \} \right\} \right) \\
	&= \mathbb P\left( \left\{ \bm Z \in \bigcup_{\bm w \in \mathbb R^p: \| \bm w \|_2 \leq 1} \{ \bm z \in \mathbb R^p : \bm v_i' \bm \Sigma^{1/2}_t \mathbf R' \bm z + \delta \bm w'\bm \Sigma^{1/2}_t \mathbf R' \bm z = \kappa_1 K_{\bm \Sigma}^{-1/2} \} \right\} \right) ~.
\end{align*}
Define $c_{i\,t} = \| \bm \Sigma^{1/2}_t \bm v_i \|_2 $ and note that the set in the last equation is such that
\begin{align*}
	&\bigcup_{\bm w \in \mathbb R^p: \| \bm w \|_2 \leq 1} \{ \bm z \in \mathbb R^p :  \| \bm \Sigma^{1/2}_t \bm v_i  \|_2 z_1 + \delta (\mathbf R \bm \Sigma^{1/2}_t \bm w)'\bm z = \kappa_1 K_{\bm \Sigma}^{-1/2} \} \\
	&\quad \subset \bigcup_{\bm w \in \mathbb R^p: \| \bm w \|_2 \leq 1} \{ \bm z\in \mathbb R^p :  c_{i\,t} z_1 + \overline{\lambda}^{1/2} \delta \bm w'\bm z = \kappa_1 K_{\bm \Sigma}^{-1/2} \}  \\
	&\quad = \bigcup_{\bm w \in \mathbb R^p: \| \bm w \|_2 \leq 1} \{ \bm z \in \mathbb R^p :  ( c_{i\,t} + \overline{\lambda}^{1/2}\delta w_1) z_1 + \overline{\lambda}^{1/2} \delta \bm w_{-1}'\bm z_{-1} = \kappa_1 K_{\bm \Sigma}^{-1/2}  \} 
	= S'_{it\,+} ~.
\end{align*}
Lastly, we note that for any $i=1,\ldots,N_\delta$ and $t=1,\ldots,T$ we have 
\[ 
	c_{i\,t} 
	= { \| \bm \Sigma^{1/2}_t (\bm \theta^*-\bm \theta_i) \|_2 \over \sqrt{ {1 \over T }\sum_{t=1}^T \| \bm X_t'(\bm \theta^* - \bm \theta_i) \|^2_{L_2}  } } 
	= \sqrt{ (\bm \theta^*-\bm \theta_i)' \bm \Sigma_t  (\bm \theta^*-\bm \theta_i) \over {1 \over T }\sum_{t=1}^T (\bm \theta^* - \bm \theta_i)' \bm \Sigma_t   (\bm \theta^* - \bm \theta_i) } > K_{\bm \Sigma}^{-1/2} ~, 
\] 
and  $c_{i\,t} - \overline{\lambda}^{1/2} \delta > K_{\bm \Sigma}^{-1/2} / 2 > 0$.\\
$\bf 2.$ We construct two sets $S_{it\,+}^{'1}$ and $S_{it\,+}^{'2}$ such that $S_{it\,+}' \subset S_{it\,+}^{'1} \cap S_{it\,+}^{'2}$. 
Define 
\begin{equation}\label{eqn:cone1}
	S_{it\,+}^{'1} = \left\{ \bm z \in \mathbb R^p : z_1 \leq 
	{ \kappa_1 K_{\bm \Sigma}^{-1/2} \over c_{i\,t} - \overline{\lambda}^{1/2} \delta } + {\overline{\lambda}^{1/2} \delta \over c_{i\,t} - \overline{\lambda}^{1/2}\delta } \sqrt{ z_{2}^2 + \ldots + z_{p}^2 } \right\} ~,
\end{equation}
that is the set of points ``underneath'' a hyper-cone.
Let $\bm z$ be in $S_{it\,+}' $, define $\dot {\bm z} = \|\bm z_{-1}\|^{-1}_2 (z_2,\ldots,z_{p})'$ and note that $\|\dot {\bm z}\|_2=1$. Then for some $\bm w$ such that $\| \bm w \|_2 \leq 1$ we have that
\begin{align*}
	z_1 &=  { \kappa_1 K_{\bm \Sigma}^{-1/2} \over c_{i\,t} + \overline{\lambda}^{1/2}\delta w_1 } - { \overline{\lambda}^{1/2}\delta \bm w_{-1}'\bm z_{-1} \over c_{i\,t} + \overline{\lambda}^{1/2}\delta w_1 } 
	= { \kappa_1 K_{\bm \Sigma}^{-1/2} \over c_{i\,t} + \overline{\lambda}^{1/2}\delta w_1 } - { \overline{\lambda}^{1/2}\delta \bm w_{-1}'\dot {\bm z} \over c_{i\,t} + \overline{\lambda}^{1/2}\delta w_1 } \| \bm z_{-1} \|_2 \\
	&\leq { \kappa_1 K_{\bm \Sigma}^{-1/2} \over c_{i\,t} - \overline{\lambda}^{1/2} \delta } + {\overline{\lambda}^{1/2} \delta \| \bm w_{-1}\|_2 \|\dot {\bm z}\|_2 \over K_{\bm \Sigma}^{-1/2}- \overline{\lambda}^{1/2}\delta } \| \bm z_{-1} \|_2 
	\leq { \kappa_1 K_{\bm \Sigma}^{-1/2} \over c_{i\,t} - \overline{\lambda}^{1/2} \delta } + {\overline{\lambda}^{1/2} \delta \over c_{i\,t} - \overline{\lambda}^{1/2}\delta } \sqrt{ z_{2}^2 + \ldots + z_{p}^2 } ~, 
\end{align*}
which implies that $\bm z$ is also in $S_{i\,+}^{'1}$.
Define  
\begin{equation}\label{eqn:cone2}
	S_{it\,+}^{'2} = \left\{ \bm z \in \mathbb R^p :  z_1 \geq { \kappa_1 K_{\bm \Sigma}^{-1/2} \over c_{i\,t} + \overline{\lambda}^{1/2} \delta } - {\overline{\lambda}^{1/2} \delta \over c_{i\,t} - \overline{\lambda}^{1/2}\delta } \sqrt{ z_{2}^2 + \ldots + z_{p}^2 } \right\} ~,
\end{equation}
that is the set of points ``above'' a hyper-cone. 
Let $\bm z$ in $S_{it\,+}' $ and define $\dot{\bm z}$ as above. Then for some $\bm w$ such that $\| \bm w \|_2 \leq 1$ we have that
\begin{align*}
	z_1 &= { \kappa_1 K_{\bm \Sigma}^{-1/2} \over c_{i\,t} + \overline{\lambda}^{1/2}\delta w_1 } - { \overline{\lambda}^{1/2}\delta \bm w_{-1}'\bm z_{-1} \over c_{i\,t} + \overline{\lambda}^{1/2}\delta w_1 } 
	= { \kappa_1 K_{\bm \Sigma}^{-1/2} \over c_{i\,t} + \overline{\lambda}^{1/2}\delta w_1 } - { \overline{\lambda}^{1/2}\delta \bm w_{-1}'\dot {\bm z} \over c_{i\,t} + \overline{\lambda}^{1/2}\delta w_1 } \| \bm z_{-1} \|_2 \\
	&\geq { \kappa_1 K_{\bm \Sigma}^{-1/2} \over c_{i\,t} + \overline{\lambda}^{1/2} \delta } - {\overline{\lambda}^{1/2} \delta \| \bm w_{-1}\|_2 \|\dot {\bm z}\|_2 \over K_{\bm \Sigma}^{-1/2}- \overline{\lambda}^{1/2}\delta } \| \bm z_{-1} \|_2  
	\geq { \kappa_1 K_{\bm \Sigma}^{-1/2} \over c_{i\,t} + \overline{\lambda}^{1/2} \delta } - {\overline{\lambda}^{1/2} \delta \over c_{i\,t} - \overline{\lambda}^{1/2}\delta } \sqrt{ z_{2}^2 + \ldots + z_{p}^2 } ~,
\end{align*}
which implies that $\bm z$ is also in $S_{i}^{+,2}$. \\
$\bf 3.$ We establish an upper bound on the probability of the event $\{ \bm Z \in S_{it\,+}' \}$. 
Note that $ S_{it\,+}' \subset S_{i\,+}^{'1} \cap S_{i\,+}^{'2} = A_i \cup B_i \cup C_i $ where
\begin{align*}
	A_{it} &= S_{it\,+}^{'1} \cap \left\{\bm z \in \mathbb R^p : z_1 \geq { \kappa_1 K_{\bm \Sigma}^{-1/2} \over c_{i\,t} - \overline{\lambda}^{-1/2} \delta } \right\}  \\
	B_{it} &= \left\{\bm z \in \mathbb R^p : { \kappa_1 K_{\bm \Sigma}^{-1/2} \over c_{i\,t} + \overline{\lambda}^{1/2} \delta } \leq z_1 \leq { \kappa_1 K_{\bm \Sigma}^{-1/2} \over c_{i\,t} - \overline{\lambda}^{1/2} \delta } \right\} \\
	C_{it} &= S_{it\,+}^{'2} \cap \left\{\bm z \in \mathbb R^p : z_1 < { \kappa_1 K_{\bm \Sigma}^{-1/2} \over c_{i\,t} + \overline{\lambda}^{1/2} \delta } \right\}  ~.
\end{align*}
Then we have that $\mathbb P( \bm X_t \in S_{i\,+}) < \mathbb P( \bm Z \in A_{i\,t} ) + \mathbb P( \bm Z \in B_{i\,t} )  + \mathbb P( \bm Z \in C_{i\,t} ) $.
Using Proposition \ref{prop:nspherical} and \ref{asm:distribution} we have that
\begin{align*}
	\mathbb P( \bm Z \in A_{it} ) &\leq K_{\bm Z} K_{\bm \Sigma}^{1/2} \overline{\lambda}^{1/2} \sqrt{\pi \over 2} p^{1/2} \delta \\
	\mathbb P( \bm Z \in B_{it} ) 
	&= \mathbb P\left( {\kappa_1 K_{\bm \Sigma}^{-1/2} \over c_{i\,t} + \overline{\lambda}^{1/2}  \delta } \leq Z_{1} \leq { \kappa_1 K_{\bm \Sigma}^{-1/2} \over c_{i\,t} - \overline{\lambda}^{1/2} \delta } \right) \\
	&\leq K_{\bm Z} \sup_{s} f_{S_{1}}(s) \kappa_1 K_{\bm \Sigma}^{-1/2} { 2 \overline{\lambda}^{1/2} \delta \over (c_{it} - \overline{\lambda}^{1/2} \delta) (c_{it} + \overline{\lambda}^{1/2} \delta) } 
	\leq 8 K_{\bm Z} \kappa_1 K_{\bm \Sigma}^{1/2} \sup_{s} f_{S_{1}}(s)  \overline{\lambda}^{1/2} \delta \\
	\mathbb P( \bm Z \in C_{it} ) &\leq K_{\bm Z} K_{\bm \Sigma}^{1/2} \overline{\lambda}^{1/2} \sqrt{\pi \over 2} p^{1/2} \delta ~.
\end{align*}
$\bf 4.$ It follows from the inequalities above, and by using analogous steps to bound the probability of the event $\mathbb P( \bm X_t \in S_{i\,-})$, that there exists a positive constant $K_1$ that depends on $K_{\bm Z}$, $\bm S$, $\underline{\lambda}$, $\overline{\lambda}$ and $\kappa_1$, but does not depend on $i$ and $t$ or $p$, such that \(\mathbb P( \bm X_t \in S_{i} ) \leq K_1 p^{1/2} \delta \).
\end{proof}

\begin{prop}\label{prop:nspherical}

Let $\bm Z$ be a $p$-dimensional random vector.
Suppose $\mathbb P( \bm Z \in E ) \leq K_{\bm Z} \mathbb P( \bm S \in E ) $ holds 
for some $p$-dimensional spherical random vector $\bm S$ whose density is assumed to exist, 
some positive constant $K_{\bm Z}$ and any $E \in \mathcal B(\mathbb R^p)$.
Define the set \( S = \{ \bm z \in \mathbb R^p : a \leq z_1 \leq a + b \sqrt{ z_2^2 + \ldots + z_{p}^2 } \} \) for some $a,b>0$.

Then, there is a positive constant $C$ such that \( \mathbb P( \bm Z \in S ) \leq C p^{1\over 2} b \).
\end{prop}

\begin{proof}
For convenience we show this result for $p>2$ and for $a=0$. We have
\begin{align*}
	& \mathbb P\left( \bm Z \in S \right) = \mathbb P\left( 0 \leq Z_1 \leq b \sqrt{ Z_2^2 + \ldots
    + Z_p^2 } \right) = \mathbb P\left( 0 \leq Z_1^2 \leq b^2 ( Z_2^2 + \ldots + Z_p^2 ) \right)\\
	& \quad = \mathbb P\left( 0 \leq { Z_1^2 \over \| \bm Z \|^2_{2} } \leq b^2 { \| \bm Z \|^2_{2} - Z^2_1 \over \| \bm Z \|^2_{2} } \right) 
	= \mathbb P\left( 0 \leq {Z_1 \over \| \bm Z \|_2 } \leq { b \over \sqrt{1 + b^2} }\right) ~.
\end{align*}
Consider the $p$-spherical transformation of $\bm Z$ \citep[Example 1.6.8]{Fang:Zhang:1990}
\begin{equation*}
	%\begin{cases}
	%Z_1 = r \cos \theta_1 \\
	%Z_i = r \left( \prod_{k=1}^{i-1} \sin \theta_k \right) \cos \theta_i & 2 \leq i \leq p-1\\
	%Z_p = r \left( \prod_{k=1}^{p-2} \sin \theta_k \right) \sin \theta_{p-1}
	%\end{cases} ~,
	(Z_1,\ldots,Z_i,\ldots,Z_p)' = r \left( \cos \theta_1 , \ldots , \prod_{k=1}^{i-1} \sin \theta_k \cos \theta_i, \ldots, \prod_{k=1}^{p-2} \sin \theta_k \sin \theta_{p-1} \right)'~,
\end{equation*}
where $r \in [0, \infty)$, $\theta_i \in [0, \pi]$ for $1 \leq i \leq p-2$ and $\theta_{p-1} \in [0,2\pi]$. 
We remark that $r$ denotes $\| \bm Z \|_2$ and that
the angles $\theta_1, \ldots, \theta_{p-1}$ are set according to the following scheme:
$\theta_1$ is the angle between the $z_1$ axis and the vector $\bm Z$;
$\theta_2$ is the angle between the projection of the $\bm Z$ vector on the span generated by $z_2,\ldots,z_{p}$, which we denote by $\bm Z^{(1)}$, and the $z_2$ axis;
$\theta_3$ is the angle between the projection of $\bm Z^{(1)}$ on the span generated by $z_3,\ldots,z_{p}$, which we denote by $\bm Z^{(2)}$, and the $z_3$ axis;
$\ldots$;
$\theta_{p-1}$ is the angle between the projection of $\bm Z^{(p-2)}$ on the span generated by $Z_{p-1},Z_{p}$ and the $z_{p-1}$ axis.
If we let $\vartheta$ denote the angle such that $\cos( \vartheta ) = b/\sqrt{1+b^2}$ then we have
\begin{eqnarray}
	&& \mathbb P\left( 0 \leq {Z_1 \over \| \bm Z \|_2 } \leq { b \over \sqrt{1 + b^2} }\right)
	= \mathbb P\left( 0 \leq \cos \theta_1 \leq { b \over \sqrt{1 + b^2} }\right) = \mathbb P\left( \vartheta \leq \theta_1 \leq {\pi \over 2} \right) ~. \label{eqn:ptheta1}
\end{eqnarray}
We use $K_{\bm Z}$ and the distribution of $\bm S$ to bound the probability in \eqref{eqn:ptheta1}.
\citet[Theorem 2.11]{Fang:Kotz:Ng:1990} establishes that the density of $\theta_1$ implied by $\bm S$ is given \( f_{\theta_1}(t) = {\Gamma\left({p\over2}\right)\over \Gamma\left({1\over2}\right) \Gamma\left({p-1\over2}\right)} \sin^{p-2} t \).
Then we have that \eqref{eqn:ptheta1} is upper bounded by
\begin{eqnarray*}
	K_{\bm Z} \int_{\vartheta}^{\pi/2} f_{\theta_1}(t) dt 
	\stackrel{(a)}{\leq} {K_{\bm Z} \over \sqrt{\pi} } {\Gamma\left({p\over2}\right) \over \Gamma\left({p-1\over2}\right)}  \int_{\vartheta}^{\pi/2} 1 dt 
	\stackrel{(b)}{\leq} {K_{\bm Z} \over \sqrt{2\pi} } p^{1/2} \left( {\pi\over 2} - \vartheta \right) 
	\stackrel{(c)}{\leq} {K_{\bm Z} \over 2} \sqrt{\pi \over 2} p^{1/2} {b \over \sqrt{1+b^2}} ~, 
\end{eqnarray*}
where 
$(a)$ follows from the fact that for any $\theta$ it holds that $\sin^{p-2} \theta \leq 1$,
$(b)$ follows from  the fact that for $x>0$ and $s\in (0,1)$ it holds that $ \Gamma(x + 1) / \Gamma(x + s) < (x + 1)^{1 - s}$ (Gautschi's inequality) and 
$(c)$ follows from the fact that \( {\pi \over 2} - \vartheta  = {\pi \over 2} - \arccos( \cos(\vartheta) ) \leq {\pi \over 2} \cos(\vartheta)  \).
The claim then follows since for any $b>0$ we have that $b/\sqrt{1+b^2} < b$.
\end{proof}

\begin{prop}\label{prop:l1:conc}
Let $\{ Z_t \}_{t=1}^T$ be a sequence of centered Bernoulli random variables.
Suppose that the $\alpha$-mixing coefficients of the sequence satisfy $ \alpha(l) < \exp( -K_\alpha l^{r_\alpha} ) $ for some $K_\alpha>0$ and $r_\alpha > 0$.

Define $p = \lfloor K_p T^{r_p} \rfloor$ for some $K_p>0$ and $r_p \in [0, r_\alpha/(r_\alpha+1) )$ and define 
\[
 	\varepsilon_T = { \sqrt{ {K_1 K_2 }{ p \log (T) } \over T^{  r_\alpha \over r_\alpha+1} } } + \sqrt{ K_2 {\log (T) } \over T^{  r_\alpha \over r_\alpha+1} }  ~, 
\]
where $K_1= { 3/ 2} $ and $K_2 = 64 \sigma^2$ with $\sigma^2 = ( {1\over 4}+ 8 \sum_{l=1}^\infty \alpha(l))$.

Then, for any $K_3>0$ and all $T$ sufficiently large, it holds that 
\[
	\left( 1 + {K_3 p^{1\over 2} \over \varepsilon_T } \right)^p \mathbb P\left( \left| {1\over T} \sum_{t=1}^T Z_t \right| > \varepsilon_T \right) \leq {4 \over T} + o\left( { 1 \over T } \right) ~. 
\]
\end{prop}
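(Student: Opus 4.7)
The plan is to apply directly the Bernstein-type inequality for $\alpha$-mixing sequences of \citet{Liebscher:1996}, which is the same tool invoked earlier in the paper. Since $Z_t$ is a centered Bernoulli random variable we have $|Z_t|\le 1$, and Liebscher's per-block variance can be controlled in terms of the long-run variance $\sigma^2 = \tfrac{1}{4} + 8\sum_{l=1}^\infty \alpha(l)$. This yields, for any integer $1 \leq m \leq T/2$,
\[
\mathbb P\!\left( \left| \tfrac{1}{T} \sum_{t=1}^T Z_t \right| > \varepsilon \right)
\;\leq\; 4 \exp\!\left( - \frac{T \varepsilon^2}{K_2 + \tfrac{8}{3}\, m\, \varepsilon} \right) + 4\,\frac{T}{m}\, \alpha(m).
\]
I would then take $m = m_T = \lfloor T^{1/(r_\alpha+1)} \rfloor$, the standard balancing choice that makes $\alpha(m_T)$ decay much faster than $T/m_T$ grows, and that tunes the two terms in the denominator at precisely the scale of $\varepsilon_T$.

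Controlling the mixing tail is immediate: under Assumption \ref{asm:mixing}, $\alpha(m_T) \leq \exp(-K_\alpha T^{r_\alpha/(r_\alpha+1)}(1+o(1)))$ decays faster than any polynomial, while $T/m_T = O(T^{r_\alpha/(r_\alpha+1)})$ grows only polynomially. Multiplying $4(T/m_T)\alpha(m_T)$ by $N_T := (1 + K_3 p^{1/2}/\varepsilon_T)^p$, whose logarithm is $O(p\log T)$ and hence only polynomial in $T$, still yields $o(T^{-1})$ with generous margin.

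The substantive step is to show the exponential term, multiplied by $N_T$, is at most $4/T$. Passing to logarithms this amounts to
\[
\frac{T \varepsilon_T^2}{K_2 + \tfrac{8}{3}\, m_T\, \varepsilon_T} \;\geq\; \log T \;+\; p\,\log\!\left( 1 + \frac{K_3 p^{1/2}}{\varepsilon_T} \right) + O(1).
\]
For the right-hand side, note $p^{1/2}/\varepsilon_T = O(T^{r_\alpha/(2(r_\alpha+1))}/\sqrt{\log T})$, so the RHS is bounded by $Cp\log T + \log T + o(\log T)$ for an explicit constant $C$. For the LHS I would use the elementary inequality $x^2/(A+Bx) \geq \min(x^2/(2A), x/(2B))$, which gives $\mathrm{LHS} \geq \min\{T\varepsilon_T^2/(2K_2),\, (3/16)(T/m_T)\varepsilon_T\}$. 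Feeding in $\varepsilon_T^2 \geq K_1 K_2 p \log T/T^{r_\alpha/(r_\alpha+1)}$ (the first summand in $\varepsilon_T$), the variance branch is at least $(K_1/2)\, p\log T \cdot T^{1/(r_\alpha+1)}$, and the Bernstein branch is at least $(3/16)\sqrt{K_1 K_2}\, T^{r_\alpha/(2(r_\alpha+1))} \sqrt{p \log T}$. Under $r_p < r_\alpha/(r_\alpha+1)$ both branches dominate $p\log T$ by a divergent factor $T^{1/(r_\alpha+1)}$ or $T^{r_\alpha/(2(r_\alpha+1))-r_p/2}$, so the inequality holds for $T$ large enough. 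The purpose of the second summand $\sqrt{K_2\log T/T^{r_\alpha/(r_\alpha+1)}}$ in $\varepsilon_T$ is to handle the degenerate regime where $p$ is (essentially) a bounded constant, in which the first summand is too small to absorb the additive $\log T$ coming from the target rate $4/T$.

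The main source of friction will be the constant bookkeeping: threading Liebscher's numerical constants ($64$ in front of $\sigma^2$, $8/3$ in front of the bounded term) cleanly through both the variance-dominated and the Bernstein-dominated branches, so that the constants $K_1=3/2$ and $K_2=64\sigma^2$ stated in the proposition emerge exactly. A secondary care point is verifying at the outset that $\varepsilon_T \to 0$ and $\varepsilon_T < 1$, so that $\varepsilon_T$ lies in the natural range of Liebscher's bound; this follows directly from $r_p < r_\alpha/(r_\alpha+1) < 1$ but is worth stating explicitly before the main calculation.
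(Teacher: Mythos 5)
Your proposal is correct and follows essentially the same route as the paper's proof: Liebscher's Bernstein-type inequality with block length $M_T = \lfloor T^{1/(r_\alpha+1)}\rfloor$, the variance bound $D(T,M_T)\leq M_T(\tfrac14+8\sum_l\alpha(l))$ via Billingsley's covariance inequality, absorption of the covering factor by showing its logarithm is at most a constant multiple of $p\log T$, and super-polynomial decay of the mixing remainder. The only (harmless) divergence is your use of $x^2/(A+Bx)\geq\min\{x^2/(2A),\,x/(2B)\}$ to split the exponent into variance- and Bernstein-dominated branches, where the paper instead divides numerator and denominator by $M_T$ and discards the linear term for large $T$; both yield the stated rate.
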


\begin{proof}
We begin by noting that $Z_t$ is zero-mean and that $\sup_{1 \leq t \leq T} \| Z_t \|_{L_\infty} \leq 1$, thus it satisfies the mixing and moment conditions of Theorem 2.1 of \citet{Liebscher:1996}.
Define $M_T = \lfloor T^{ 1 \over r_\alpha+1} \rfloor$ and note that for all $T \geq 2$ 
we have that $ M_T \in [1,T] $ and \( 4 M_T <  T \varepsilon_T \), as required by the theorem.
Then, we have
\begin{equation*}
\mathbb P\left( \left| \sum_{t=1}^T Z_t \right| > T \varepsilon_T \right) \leq 4 \exp\left( - { T \varepsilon_T^2 \over 64 D(T,M_T) / M_T + {8 \over 3} M_T \varepsilon_T  } \right) + 4 { T \over M_T } e^{-K_\alpha M_T^{r_\alpha}} ~,
\end{equation*}
with \( D(T,M_T) = \sup_{0\leq j\leq T-1} \mathbb E\left[ \left( \sum_{t=j+1}^{j+M_T \wedge T} Z_{t} \right)^2 \right] \).
Define $\gamma(l) = \sup_{1\leq t \leq T-l} |\operatorname{Cov}(Z_{t},Z_{t+l})|$ for $l=0,\ldots,T-1$.
Note that $\gamma(0) \leq 1/4$ and, by Billingsley's inequality \citep[Corollary 1.1]{Bosq:1998}, that $\gamma(l) \leq 4 \alpha(l)$ for $l=1,\ldots,T-1$.
Thus, it holds that $D(T,M_T) \leq M_T \gamma(0) + 2 M_T \sum_{l=1}^{M_T-1} \gamma(l) \leq M_T ( {1\over 4} + 8 \sum_{l=1}^{\infty} \alpha(l) ) = M_T \sigma^2 $.
We have
\begin{align*}
& \left( 1 + {K_3 p^{1\over 2} \over \varepsilon_T} \right)^p \mathbb P\left( \left| {1 \over T} \sum_{t=1}^T Z_t \right| > \varepsilon_T \right) \\
& \quad \leq 4 \left( 1 + {K_3 p^{1\over 2}  \over \varepsilon_T } \right)^p\exp\left( - { T \varepsilon_T^2 \over {64 \sigma^2} + {8 \over 3} M_T \varepsilon_T  } \right) + 4 \left( 1 + {K_3 p^{1\over 2} \over \varepsilon_T} \right)^p{ T \over M_T } e^{-K_\alpha M_T^{r_\alpha}} \\
& \quad \leq 4 \left( 1 + {K_3 p^{1\over 2} \over \varepsilon_T } \right)^p\exp\left( - { T^{r_\alpha \over r_\alpha+1} \varepsilon_T^2 \over {64 \sigma^2} + {8\over 3} \varepsilon_T  } \right) 
+ 8 \left( 1 + {K_3 p^{1\over 2} \over \varepsilon_T} \right)^p T^{r_\alpha \over r_\alpha+1 } \exp\left( - { K_\alpha  \over 2^{r_\alpha} } T^{r_\alpha \over r_\alpha+1 } \right)  \\
& \quad \stackrel{(a)}{\leq} 4 \left( 1 + {K_3 p^{1\over 2} \over \varepsilon_T} \right)^p\exp\left( - { T^{r_\alpha \over r_\alpha + 1} \over	64 \sigma^2 } \varepsilon_T^2 \right) 
+ 8 \left( 1 + {K_3 p^{1\over 2} \over \varepsilon_T} \right)^p T^{r_\alpha \over r_\alpha+1 } \exp\left( - { K_\alpha  \over 2^{r_\alpha} } T^{r_\alpha \over r_\alpha+1 } \right)  \\
& \quad = A_T + B_T ~,
\end{align*}
where $(a)$ follows from the fact that $x^2/(64 \sigma^2+{8\over 3}x) \geq x^2/(64 \sigma^2)$ for any $x>0$ and $\sigma^2 > 1/4$.
Note that for all $T$ sufficiently large we have
\begin{align*}
	\log \left( 1 + {K_3 p^{1\over 2} \over \varepsilon_T } \right)^p 
	& = p \log\left( \varepsilon_T + K_3 p^{1\over 2} \right) - p \log (\varepsilon_T) \\
	& = {1\over 2} p \log (p) + p \log\left( {\varepsilon_T \over p^{1\over 2}} + K_3 \right) - p \log (\varepsilon_T) \\
	& \leq {1\over 2} p \log (K_p) + {r_p\over 2} p \log (T) + p \log\left( 1 + K_3 \right) - p \log (\varepsilon_T) \\
	& \leq \left( {1\over 2} + {r_p\over 2} + 1 + {r_\alpha \over 2(r_\alpha+1)} \right) p \log (T) < K_1 p \log (T) ~,
\end{align*}
where the last inequality follows from the fact that $r_p<1$ and ${r_\alpha / (2(r_\alpha+1))} < 1$.
Finally, the claim follows after noting that for all $T$ sufficiently large we have 
\begin{align*}
	& A_T \leq 4 \exp\left( K_1 p \log (T) - { T^{r_\alpha \over r_\alpha+1} \over K_2 } \varepsilon^2_T \right) \\
	& \quad \stackrel{(a)}{\leq} 4 \exp \left( K_1 p \log (T) - K_1 p \log (T) - \log (T) \right) \leq 4 \exp \left( - \log (T) \right) = {4\over T} ~,
\end{align*}
where $(a)$ follows from the fact that $(x+y)^2 \geq x^2 + y^2$ for $x,y\geq0$, and that
\begin{align*}
	B_T & \leq 8 \left( 1 + {K_3 p^{1/2} \over \varepsilon_T} \right)^p T^{  r_\alpha  \over r_\alpha + 1 } \exp\left( - { K_\alpha  \over 2^{r_\alpha} } T^{r_\alpha \over r_\alpha+1 } \right)  \\
	& \leq 8 \exp \left( {r_\alpha \over r_\alpha +1 } \log (T) + K_1 K_p T^{r_p} \log (T) - {K_\alpha \over 2^{r_\alpha}} T^{r_\alpha \over r_\alpha + 1} \right) = o\left({1 \over T}\right)~.
\end{align*}
\end{proof}

\begin{prop}\label{prop:l2:conc}
Let $\{\bm Z_t \}_{t=1}^T$ be a sequence of $p$-dimensional zero-mean random vectors.
Suppose that
(i) $\sup_{1 \leq i \leq p} \sup_{1\leq t \leq T} \| Z_{i\,t} \|_{L_{2}} \leq K_m $ and  $\sup_{1 \leq i \leq p} \sup_{1\leq t \leq T} \| Z_{i\,t} \|_{L_{r_m}} \leq 2K_m $ for some $K_m \geq 1$ and $r_m > 2$;
(ii) the $\alpha$-mixing coefficients of the sequence satisfy $ \alpha(l) < \exp( -K_\alpha l^{r_\alpha} ) $ for some $K_\alpha>0$ and $r_\alpha > 0$; and
(iii) $p = \lfloor K_p T^{r_p} \rfloor$ for some $K_p>0$ and $r_p \in [0, (r_m-2)/2 \wedge 1)$.

Then, for all $T$ sufficiently large it holds that
\[
	\mathbb P \left(
	\left\| {1 \over T} \sum_{t=1}^T \bm Z_t \right\|_2 > 12 \, \sigma \, \sqrt{ {p \log (T) \over T } }   
	\right) \leq {3 K_p (2K_m)^{r_m} \over \sigma \log (T) } + o\left({ 1 \over \log (T) }\right) ~,
\]
where $\sigma^2 = K_m^2 ( 1 + 128 {r_m \over r_m -2} \sum_{l=1}^\infty \alpha(l)^{1-{2\over r_m} } )$. 
\end{prop}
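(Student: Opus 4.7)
The plan is to combine a truncation argument with Liebscher's Bernstein-type inequality (the very tool already used in Proposition \ref{prop:l1:conc}), while reading off the variance constant $\sigma^2$ from Rio's covariance inequality. First, I would reduce the problem to coordinatewise tail bounds via $\|\bm v\|_2 \leq \sqrt{p}\|\bm v\|_\infty$ and a union bound: since $12\sigma \sqrt{p\log(T)/T} = \sqrt{p}\cdot 12\sigma\sqrt{\log(T)/T}$,
\[
\mathbb P\left( \left\| \frac{1}{T}\sum_{t=1}^T \bm Z_t \right\|_2 > 12 \sigma \sqrt{\frac{p \log T}{T}} \right) \leq \sum_{i=1}^p \mathbb P\left( \left| \frac{1}{T} \sum_{t=1}^T Z_{i\,t} \right| > 12 \sigma \sqrt{\frac{\log T}{T}} \right) .
\]
It therefore suffices to show that each summand is at most $3(2K_m)^{r_m}/(p\sigma\log T) + o(1/(p\log T))$.

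Since $Z_{i\,t}$ is only in $L_{r_m}$, Liebscher's inequality cannot be applied directly; I would decompose $Z_{i\,t} = Z_{i\,t}^{(1)} + Z_{i\,t}^{(2)}$ with $Z_{i\,t}^{(1)} = Z_{i\,t}\mathbbm{1}_{\{|Z_{i\,t}|\leq \tau_T\}} - \mathbb E[Z_{i\,t}\mathbbm{1}_{\{|Z_{i\,t}|\leq \tau_T\}}]$ at a threshold $\tau_T$ to be fixed later. The residual part is handled crudely by Markov: using $|Z_{i\,t}|\mathbbm{1}_{\{|Z_{i\,t}|>\tau_T\}} \leq |Z_{i\,t}|^{r_m}/\tau_T^{r_m-1}$ together with $\|Z_{i\,t}\|_{L_{r_m}} \leq 2K_m$ gives $\mathbb E|Z_{i\,t}^{(2)}| \leq 2(2K_m)^{r_m}/\tau_T^{r_m-1}$, and hence by Markov
\[
\mathbb P\left( \left| \frac{1}{T}\sum_{t=1}^T Z_{i\,t}^{(2)} \right| > 6\sigma\sqrt{\log(T)/T} \right) \leq \frac{(2K_m)^{r_m}}{3\sigma \tau_T^{r_m-1}}\sqrt{\frac{T}{\log T}} .
\]
Calibrating $\tau_T$ so that, after multiplying by $p = K_p T^{r_p}$, this contribution matches the stated $3K_p(2K_m)^{r_m}/(\sigma\log T)$ forces $\tau_T^{r_m-1} \propto T^{r_p + 1/2}\sqrt{\log T}/\sigma$.

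For the truncated part $Z_{i\,t}^{(1)}$ --- which is bounded by $2\tau_T$ and inherits the $\alpha$-mixing --- I would apply Liebscher's Theorem 2.1 with block length $M_T \asymp (\log T)^{1/r_\alpha}$, so that the second Liebscher term $4(T/M_T)\exp(-K_\alpha M_T^{r_\alpha})$ is $o(1/(p\log T))$. The crucial departure from Proposition \ref{prop:l1:conc} is the treatment of the block-variance $D(T,M_T)$: instead of Billingsley's covariance bound (which yields $\alpha(l)$ decay and is adequate only for bounded summands), I would apply Rio's covariance inequality to the original $L_{r_m}$ variables, obtaining $|\operatorname{Cov}(Z_{i\,s},Z_{i\,s+l})| \leq c(2K_m)^2 \alpha(l)^{1-2/r_m}$. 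This is precisely the source of the $\sum_l \alpha(l)^{1-2/r_m}$ factor in $\sigma^2$, with the constant $r_m/(r_m-2) = 1/(1-2/r_m)$ emerging from the H\"older conjugate exponent used inside Rio's inequality. Inserting the chosen $\tau_T$ and $\varepsilon_T = 12\sigma\sqrt{\log T/T}$ into Liebscher's exponent, the first term is $o(1/(p\log T))$ so long as $M_T \tau_T \varepsilon_T = O(\sigma)$; with the above choices this reduces to $(r_p+1/2)/(r_m-1) < 1/2$, i.e.\ $r_p < (r_m-2)/2$, which is exactly Assumption \ref{asm:nparams}.

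The hard part will be the bookkeeping that propagates the $L_{r_m}$ moment bound on the \emph{original} $Z_{i\,t}$ through the truncation into Liebscher's variance term for the \emph{truncated} $Z_{i\,t}^{(1)}$, so that the exponent is governed by the sharp $\sigma^2$ rather than by a crude $\tau_T^2$-dependent quantity (which, given $\tau_T \gg 1$, would destroy the rate). Rio's inequality is applied to $Z_{i\,t}$ directly and its conclusion is transferred to $Z_{i\,t}^{(1)}$ via the observation that centered truncation is a contraction in every $L_q$ norm with $q \leq r_m$; this delivers the stated $\sigma^2 = K_m^2(1 + 128(r_m/(r_m-2))\sum_l \alpha(l)^{1-2/r_m})$, with the absolute constant $128$ arising from combining the standard form of Rio's inequality with the factor-of-$64$ in Liebscher's denominator.
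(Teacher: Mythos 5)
Your proposal follows essentially the same route as the paper's proof: the reduction to coordinatewise tails via $\|\cdot\|_2 \le \sqrt{p}\,\|\cdot\|_\infty$ and a union bound, truncation at a threshold with $\tau_T^{r_m-1}\asymp T^{r_p+1/2}\sqrt{\log T}$, Markov's inequality for the untruncated part, and Liebscher's Theorem 2.1 combined with a Davydov/Rio covariance bound for the block variance of the truncated part (the paper takes $M_T = \lfloor b_T^{-1}T^{1/2}/\sqrt{\log T}\rfloor$ rather than $(\log T)^{1/r_\alpha}$, but both choices work). The one inaccuracy is that centered truncation is not an $L_q$-contraction but at most doubles the $L_q$ norm --- this is precisely the source of the paper's bound $\|Z'_{i\,t}\|_{L_{r_m}}\le 4K_m$ and hence of the constant $128 = 2\cdot 4\cdot (4K_m)^2/K_m^2$ in $\sigma^2$ (it does not come from the $64$ in Liebscher's denominator) --- but this is harmless since the stated $\sigma^2$ has the necessary slack.
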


\begin{proof}
For any positive constant $K$ we have that
\begin{align*}
	\mathbb P\left( \left\| {1 \over T} \sum_{t=1}^T \bm Z_t \right\|_2 
    > K \sqrt{ {p \log (T) \over T } } \right) 
	& \leq \mathbb{P} \left( \max_{1 \leq i \leq p} \left| {1 \over T} \sum^T_{t=1} Z_{it}\right| 
    > K \sqrt{ \log (T) \over T } \right)  \\
	& \leq p \max_{1 \leq i \leq p} \mathbb P\left( \left| \sum_{t=1}^T Z_{i\,t} \right| 
    > K \sqrt{T \log (T) } \right) ~.
\end{align*}
Let \( \sum_{t=1}^T Z_{i\,t} = \sum_{t=1}^T Z'_{i\,t} + \sum_{t=1}^T Z''_{i\,t} \) where \( Z'_{i\,t} =  Z_{i\,t} \mathbbm{1}(|Z_{i\,t}| \leq b_T) - \mathbb E \left( Z_{i\,t} \mathbbm {1}(|Z_{i\,t}| \leq b_T ) \right) \) and 
\( Z''_{i\,t}  =  Z_{i\,t} \mathbbm{1}(|Z_{i\,t}| > b_T)    - \mathbb E \left( Z_{i\,t} \mathbbm {1}(|Z_{i\,t}| > b_T) \right) \).
For any $\lambda \in (0,1)$ we have
\begin{align*}
	& p \max_{1 \leq i \leq p} \mathbb P\left( \left|  \sum_{t=1}^T Z_{i\,t} \right| 
    > K \sqrt{T \log (T)} \right) \\
	& \quad \leq p \max_{1 \leq i \leq p} \mathbb P\left( \left|\sum_{t=1}^T Z'_{i\,t} \right| 
    > \lambda K \sqrt{T \log (T)} \right) 
    + p \max_{1 \leq i \leq p} \mathbb P\left( \left| \sum_{t=1}^T  Z''_{i\,t} \right| 
    > (1-\lambda) K \sqrt{T \log (T) } \right) ~.
\end{align*}
The sequence $\{ Z'_{i\,t} \}_{t=1}^T$ has the same mixing properties as $\{ \bm Z_{t} \}_{t=1}^T$ and $\sup_{1 \leq i \leq p} \sup_{1\leq t\leq T} \| Z_{i\,t}' \|_\infty < 2b_T$. 
Define
\( \varepsilon_T' = \lambda K T^{1\over 2} \sqrt{\log (T) } \),
\( b_T = ( T^{1+2 r_p\over 2} \sqrt{\log (T)})^{1\over r_m-1} \) and 
\( M_T = \lfloor b_T^{-1} T^{ {1\over 2} } / \sqrt{\log (T)} \rfloor  \).
For all $T$ sufficiently large the conditions of Theorem 2.1 of \citet{Liebscher:1996} are satisfied, since for all $T$ sufficiently large we have that $ M_T \in [1,T] $ and \( 4 (2 b_T) M_T < \varepsilon_T' \)
and we have
\begin{equation*}
	p \max_{1 \leq i \leq p} \mathbb P\left( \left| \sum_{t=1}^T Z'_{i\,t}\right| > \varepsilon_T' \right) 
	\leq 4 p \exp\left( - { (\varepsilon_T')^2 \over 64 {T \over M_T}  D(T,M_T) + {16 \over 3} b_T M_T \varepsilon_T' } \right) + 4 {pT\over M_T} \exp \left( -K_\alpha M_T^{r_\alpha} \right) ~,
\end{equation*}
with \( D(T,M_T) = \sup_{0\leq j\leq T-1} \mathbb E\left[ \left( \sum_{t=j+1}^{j+M_T \wedge T} Z'_{i\,t} \right)^2 \right] \).
Define $\gamma(l) = \sup_{1\leq i\leq p} \sup_{1\leq t \leq T-l} |\operatorname{Cov}(Z'_{i\,t}, Z'_{i\,t+l})|$ for $l=0,\ldots,T-1$
and note that $ D(T,M_T) \leq M_T \sum_{l=-T+1}^{T-1} \gamma(l)$. 
Next, we note that $\gamma(0) \leq K_{m}^2$ since 
\begin{eqnarray*}
 	\operatorname{Var}( Z'_{i\,t} ) = \| Z_{i\,t} \mathbbm{1}(|Z_{i\,t}| \leq b_T) \|^2_{L_2} - [ \mathbb E( Z_{i\,t} \mathbbm{1}(|Z_{i\,t}| \leq b_T) ) ]^2 \leq \| Z_{i\,t} \|^2_{L_2} \leq K_{m}^2 ~.
\end{eqnarray*}
Davydov's inequality \citep[Corollary 1.1]{Bosq:1998} implies that 
\begin{eqnarray*}
	\gamma(l) & \leq & 4 {r_m\over r_m-2} \alpha(l)^{1 - {2\over r_m}} \| Z'_{i\,t} \|_{L_{r_m}} \| Z'_{i\,t+l} \|_{L_{r_m}} \leq 64 K_m^2  {r_m\over r_m-2} \alpha(l)^{1 - {2\over r_m}} ~, 
\end{eqnarray*}
for $l=1,\ldots,T-1$, where we have used the fact that 
\begin{eqnarray*}
	\| Z_{i\,t}' \|_{L_{r_m}} 
	& \leq & \| Z_{i\,t} \mathbbm{1}(|Z_{i\,t}| \leq b_T) \|_{L_{r_m}} + \|\mathbb E( Z_{i\,t} \mathbbm{1}(|Z_{i\,t}| \leq b_T)  ) \|_{L_{r_m}}  \\
	& \leq & \| Z_{i\,t} \mathbbm{1}(|Z_{i\,t}| \leq b_T) \|_{L_{r_m}} + \|  Z_{i\,t} \mathbbm{1}(|Z_{i\,t}| \leq b_T)  \|_{L_1}  \\
	& \leq & 2 \| Z_{i\,t} \mathbbm{1}(|Z_{i\,t}| \leq b_T) \|_{L_{r_m}} \leq 2 \| Z_{i\,t}  \|_{L_{r_m}} \leq 4 K_m ~.
\end{eqnarray*}
These together imply that $D(T,M_T) \leq M_T K_m^2 (1 + 128 {r_m \over r_m -2} \sum_{l=1}^\infty \alpha(l)^{1-{2\over r_m} }) = M_T \sigma^2 $.
For any $K$ that satisfies 
\begin{equation} \label{eqn:k1cond}
	K  > {1\over \lambda} \left( 8 \sqrt{ \sigma^2 + {1\over 9} } + {8\over 3} \right)~,
\end{equation}
we have \( 1 - { \lambda^2 K^2  / ( 64 \sigma^2 + {16\over 3} \lambda K } ) < 0 \). 
Notice that the condition is satisfied, for instance, by $K = \lambda^{-1} 8 \sqrt{ 2 \sigma^2 }$ since $\sigma^2 \geq 1$.
Thus, for any $K$ that satisfies this we have
\begin{align*}
	p \max_{1 \leq i \leq p} \mathbb P\left( \left| \sum_{t=1}^T Z'_{i\,t}\right| > \varepsilon_T' \right) 
	& \leq 4 K_p \exp\left( r_p \log (T) - { \lambda^2 K^2 T \log (T)  \over 64 \sigma^2 T + {16\over 3}\lambda K T } \right) + 4 K_p T^{1+r_p} \exp\left( - K_\alpha M_T^{r_\alpha} \right) \\
	& \leq 4 K_p \exp\left( \left[ r_p - { \lambda^2 K^2 \over 64 \sigma^2 + {16\over 3} \lambda K } \right] \log (T)  \right) + 4 K_p T^{1+r_p} \exp\left( - K_\alpha M_T^{r_\alpha} \right) \\
	& \leq o\left({1\over \log (T)}\right)  ~.
\end{align*}
Let $\varepsilon_T'' = (1-\lambda) K T^{1\over 2} \sqrt{\log (T)}$ and note that
\begin{align*}
	p \max_{1 \leq i \leq p} \mathbb P\left( \left| \sum_{t=1}^T Z''_{i\,t} \right| > {\varepsilon_T''} \right) 
	& \stackrel{(a)}{\leq} {p \over \varepsilon_T''} \max_{1 \leq i \leq p} 
    \mathbb E \left| \sum_{t=1}^T Z''_{i\,t} \right|  
	\leq {p\over \varepsilon_T''} \sum_{t=1}^T \max_{1 \leq i \leq p} \mathbb E | Z''_{i\, t} |
    \nonumber \\
    &\leq {2p \over \varepsilon_T''} \sum_{t=1}^T \max_{1 \leq i \leq p} 
    \mathbb E \left| Z_{i\,t} \mathbbm{1}(|Z_{i\,t}| > b_T) \right| \nonumber \\
	&\stackrel{(b)}{\leq} {2p \over \varepsilon_T''} \sum_{t=1}^T 
    {\max_{1 \leq i \leq p} \mathbb E |Z_{i\,t}|^{r_m} \over b_T^{r_m-1} } \leq {2 p T (2K_m)^{r_m} \over \varepsilon_T'' b_T^{r_m-1}} = {2 K_p (2K_{m})^{r_m} \over (1-\lambda) K \log (T) } ~,
\end{align*}
where $(a)$ follows from Markov's inequality and $(b)$ from the inequality $\mathbb E( | Z \mathbbm{1}(|Z|>b) | ) \leq \mathbb E(|Z|^r)/b^{r-1}$ for a random variable $Z$ with finite $r$-th moment and positive constant $b$.
The claim follows after picking $\lambda= 8 \sqrt{2} / 12$ and noticing that $K = 12 \sigma$ satisfies \eqref{eqn:k1cond}.
\end{proof}

%\section{Useful Results}

%\setcounter{prop}{0}
%\setcounter{thm}{0}

%\begin{prop-nonumber}{\bf \citep[Corollary 1.1]{Bosq:1998} }\label{prop:alpha_ineq} 
%(Davydov's Inequality) Let $X$ and $Y$ be two random variables such that $X\in L_q$ and $Y \in L_r$ where $q>1$, $r>1$ and ${1\over q} + {1 \over r} = 1 - {1\over p}$. 
%Then \( | \operatorname{Cov}(X,Y)| \leq 2 p (2\alpha)^{1\over p} \|X\|_{L_q} \|Y\|_{L_r} \).

%(Billingsley's Inequality) Let $X$ and $Y$ be two random variables such that $X\in L_\infty$ and $Y \in L_\infty$. Then we have that \( | \operatorname{Cov}(X,Y)| \leq 4 \alpha \|X\|_{L_\infty} \|Y\|_{L_\infty} \).
%\end{prop-nonumber}

%\begin{prop-nonumber}{\bf \citep{Ibragimov:1962}}
%(Ibragimov's Inequality)
%\[
%	\| \mathbb E_T( X_{T+h} ) - \mathbb E( X_{T+h} ) \|_{L_1} \leq 6 \alpha^{1\over 2}(h) \| X_{T+h} \|_{L_2}
%\]
%Theorem 14.2 in Davidson.
%\end{prop-nonumber}

%\begin{thm-nonumber}{\bf \citep[Theorem 2.1]{Liebscher:1996}}\label{thm:lr} 
%Let $X_1,\ldots,X_T$ be a sequence of zero-mean random variables such that $\sup_{1 \leq t \leq T} \|X_t\|_{L_\infty} \leq b $.
%Then for all $T$, integer $M \in [1,T]$ and $\varepsilon > 4 M b $ we have that
%\[
%	\mathbb P\left( \left| \sum_{t=1}^T X_t\right| > \varepsilon \right) \leq 4 \exp\left( - {\varepsilon^2 \over 64 {T\over M} D(T,M) + {8\over3} \varepsilon M b } \right) + 4 {T\over M} \alpha(M) ~,
%\]
%where \( D(T,m) = \sup_{0\leq j\leq T-1} \mathbb E\left[ \left( \sum_{t=j+1}^{j+m \wedge T} X_t \right)^2 \right] \) for $m \in [1,T]$.
%\end{thm-nonumber}

\singlespace
\begin{small}
\bibliography{learning}
\bibliographystyle{natbib}
\end{small} 

\end{document}